\numberwithin{equation}{section}
\newtheorem{theorem}{Theorem}
\newtheorem{lemma}[theorem]{Lemma}
\newtheorem{corollary}[theorem]{Corollary}
\theoremstyle{definition}
\DeclareMathOperator{\Tr}{Tr}
\DeclareMathOperator{\rk}{rk}
\newcommand*{\hilb}{\mathcal{H}}
\newcommand*{\hal}{\mathcal{A}}
\newcommand{\cC}{\mathcal{C}}
\newcommand{\cD}{\mathcal{D}}
\newcommand{\cH}{\mathcal{H}}
\newcommand{\cA}{\mathcal{A}}
\newcommand{\eps}{\varepsilon}
\newcommand{\cF}{\mathcal{F}}
\newcommand{\bD}{\mathbf{D}}
\newcommand{\bE}{\mathbf{E}}
\DeclareMathOperator{\supp}{supp}
\begin{document}
	\title{Probing multipartite entanglement through persistent homology}
	
	\author[1,2,4]{Gregory A. Hamilton\thanks{\texttt{gah4@illinois.edu}}}

	\author[3,4]{Felix Leditzky\thanks{\texttt{leditzky@illinois.edu}}}
	\affil[1]{\normalsize Department of Physics, University of Illinois Urbana-Champaign}
	\affil[2]{Institute for Condensed Matter Theory, University of Illinois Urbana-Champaign}
	\affil[3]{Department of Mathematics, University of Illinois Urbana-Champaign}
	\affil[4]{Illinois Quantum Information Science and Technology Center (IQUIST),\protect\\ University of Illinois Urbana-Champaign}

	\date{\today}

\maketitle

\begin{abstract}
We propose a study of multipartite entanglement through persistent homology, a tool used in topological data analysis. In persistent homology, a 1-parameter filtration of simplicial complexes called persistence complex is used to reveal persistent topological features of the underlying data set.
This is achieved via the computation of homological invariants that can be visualized as a persistence barcode encoding all relevant topological information. In this work, we apply this technique to study multipartite quantum systems by interpreting the individual systems as vertices of a simplicial complex. To construct a persistence complex from a given multipartite quantum state, we use a generalization of the bipartite mutual information called the deformed total correlation. Computing the persistence barcodes of this complex yields a visualization or `topological fingerprint' of the multipartite entanglement in the quantum state. The barcodes can also be used to compute a topological summary called the integrated Euler characteristic of a persistence complex. We show that in our case this integrated Euler characteristic is equal to the deformed interaction information, another multipartite version of mutual information. When choosing the linear entropy as the underlying entropy, this deformed interaction information coincides with the $n$-tangle, a well-known entanglement measure. The persistence barcodes thus provide more fine-grained information about the entanglement structure than its topological summary, the $n$-tangle, alone, which we illustrate with examples of pairs of states with identical $n$-tangle but different barcodes.
Furthermore, a variant of persistent homology computed relative to a fixed subset yields an interesting connection to strong subadditivity and entropy inequalities. We also comment on a possible generalization of our approach to arbitrary resource theories.
\end{abstract}

\section{Introduction}
Entanglement is a strong form of correlation between quantum systems that lies at the heart of quantum information processing, giving rise to novel protocols in quantum algorithms \cite{shor1999polynomial,Jozsa2003}, quantum communication \cite{bennett1992communication,bennett1993teleporting,smith2008quantum}, quantum computation \cite{Gottesman1999,Knill2001,Nielsen2003,leung2004quantum,raussendorf2001one,raussendorf2002computational,raussendorf2003measurement}, quantum cryptography \cite{ekert1991crypto}, and quantum-enhanced sensing \cite{caves1981quantum,holland1993interferometric,wineland1994squeezed,bollinger1996measurements,Giovannetti2004}.
Due to the fundamental role of entanglement as a resource in these tasks, finding a succinct mathematical characterization of entangled quantum states is a central goal in quantum information theory.
In the simplest case of pure bipartite quantum states, the singular value decomposition gives an efficient tool to accomplish this goal.
Unfortunately, the more general cases of mixed quantum states or multipartite states on systems consisting of more than two parts are significantly more challenging.
This is partly due to the fact that the dimension of the state space grows exponentially with the number of constituents, and hence an exponentially large number of parameters is needed to describe states.
As a result, our understanding of entanglement even for pure multipartite quantum states is rudimentary at best. 

A common approach to studying multipartite entanglement is using functionals that measure the correlations in entangled states.
The desired properties of such functionals depend on the operational context.
For example, we may consider the setting where the individual parties of a multipartite system can only perform local operations and classical communication (LOCC) between each other.
Since entangled states cannot be created from scratch using LOCC protocols alone, they form a valuable resource in this context \cite{chitambar2019quantum}.
\emph{Entanglement measures}, functionals that cannot increase on average under LOCC, aim to quantify the usefulness of such a resource \cite{Vidal2000,Horodecki_Horodecki_Horodecki_Horodecki_2009}.
An important example of an entanglement measure in this work is the $n$-tangle, which was originally defined for three qubits \cite{Dur_Vidal_Cirac_2000} and can be generalized to multi-qubit systems \cite{wong2001multiparticle}.

The LOCC setting is well-motivated from an operational point of view, but rather intractable in mathematical terms~\cite{chitambar2014everything}.
One may consider the relaxed setting of \emph{stochastic} LOCC (SLOCC) consisting of LOCC protocols that achieve a state transformation only with some positive success probability.
Since SLOCC maps can be identified with local operators, this gives rise to a more tractable mathematical treatment \cite{Dur_Vidal_Cirac_2000,verstraete2002four}.
Similar to LOCC, there are measures that quantify entanglement in the SLOCC context.
Particularly interesting are measures that are invariant under invertible SLOCC transformations corresponding to local invertible operators. These measures can be used to distinguish different SLOCC equivalence classes~\cite{Dur_Vidal_Cirac_2000,verstraete2002four,verstraete2003normal,walter2013entanglement,Gour_Wallach_2013}.

Another useful way to measure the correlations in quantum systems is via their usefulness or operational relevance in an information-theoretic task.
The guiding principle is to consider an information-processing task that makes use of the correlations in the given quantum state, and showing that the optimal rate of achieving this task is equal to an entropic measure.
For example, the \emph{quantum mutual information} of a bipartite state quantifies the amount of local noise that has to be applied to the state in order to decorrelate it \cite{groisman2005correlations}.
In contrast to the entanglement measures mentioned above, the mutual information measures the total amount of correlation between two systems, including quantum as well as classical correlations.
There are different generalizations of the mutual information to multipartite systems; particularly interesting for us is the so-called \emph{total correlation}~\cite{watanabe1960information,herbut2004mutual,groisman2005correlations,wilde2015multipartite}, which enjoys a similar operational interpretation as the mutual information, as the total amount of local noise needed to fully decorrelate a multipartite state \cite{groisman2005correlations}.
Another multipartite generalization of the mutual information is the \emph{interaction information}, with applications in the study of topologically ordered systems \cite{kitaev2006topological}, scrambling \cite{Hosur2016,schnaack2019tripartite}, and the AdS/CFT correspondence \cite{Hayden2016,nezami2020multipartite}.

\subsection{Overview of main results}\label{sec:overview-results}
In this work, we propose a novel approach to characterizing multipartite entanglement based on a tool from topological data analysis called persistent homology.
The key idea is to build up a 1-parameter filtration of abstract simplicial complexes called a persistence complex, defined by measuring the correlations between quantum systems in a multipartite state using a suitable functional.
We choose a $q$-deformed Tsallis version of the total correlation as this functional.
Persistent homology then amounts to keeping track of the changes in topology as a function of the filtration parameter, thus revealing persistent topological features that characterize the underlying entanglement structure, as explained in Fig.~\ref{fig:example}.
Our main visualization tool is the persistence barcode of a persistence complex, which can be interpreted as a ``topological fingerprint'' of multipartite entanglement.

We find that the integrated Euler characteristic of the filtration complex induced by the $q$-deformed total correlation equals the $q$-deformed interaction information (\Cref{thm:interaction-information}).
Specializing to $q=2$, which amounts to choosing the linear entropy as the underlying entropy, this integrated Euler characteristic equals the $n$-tangle, the entanglement monotone mentioned above (\Cref{thm:n-tangle}). 
This result answers a question raised by Eltschka and Siewert~\cite{Eltschka2018distributionof} about the potentially topological nature of an entanglement measure called \emph{distributed concurrence} that can be expressed as a formula reminiscent of an Euler characteristic.
This distributed concurrence coincides (up to an irrelevant factor) with the integrated Euler characteristic of the persistence complex derived from the $q$-deformed total correlation via \Cref{thm:n-tangle}, thereby establishing the desired topological interpretation.

Another consequence of \Cref{thm:n-tangle} is that persistent homology provides a more fine-grained analysis of the entanglement structure of a multipartite state than the $n$-tangle.
We illustrate this point by exhibiting examples of SLOCC-inequivalent pairs of states with identical $n$-tangle values but differing persistent barcodes in Figures~\ref{fig:comparison-graph-states} and \ref{fig:comparison-osterloh-siewert}.
These examples demonstrate that the topological fingerprint conveys more information about the multipartite entanglement than the $n$-tangle alone, and we conjecture that they may be used to rigorously detect SLOCC-inequivalence.

Using a modified technique that computes persistent homology relative to a given subset, we then exhibit an interesting link between persistent homology and entropy inequalities: the integrated Euler characteristic of the relative persistence complex of a tripartite system equals the negative conditional mutual information, and is thus non-positive because of strong subadditivity (\Cref{thm:non-positivity-IEC}).

Finally, we discuss how our approach can naturally be generalized to other multipartite correlation functionals, in particular generalized divergences satisfying the data processing inequality.
We argue that the persistent homology arising from this choice is a promising tool to study more general resource theories.

\subsection{Related work}

Persistent homology has previously been used as a tool to study entanglement in \cite{dipierro2018homological,mengoni2019persistent,olsthoorn2021persistent}.
In these works, the individual systems of a multipartite system are treated as data points, and a \emph{bipartite} correlation measure is used to assign distances between pairs of systems.
In \cite{dipierro2018homological,mengoni2019persistent} the concurrence is used as the bipartite measure, whereas \cite{olsthoorn2021persistent} uses an inverse mutual information quantity.
The chosen distance measure is used to build the Vietoris-Rips complex, in which a subset of $k+1$ points forms a $k$-simplex whenever the distance of any two data points in the subset is at most equal to some fixed parameter $\eps$.
Varying the parameter $\eps$ then yields a filtration of simplicial complexes, and persistent homology tools are used to analyze the entanglement of the quantum state.

Here, we use a \emph{multipartite} correlation measure to build the persistence complex, which is explained in more detail in Section~\ref{sec:ph}.
The advantage of this approach is that multipartite correlations in $k$-simplices are more directly encoded in the persistent homology.
This leads to a more refined analysis of the entanglement structure of a quantum state and yields a well-known entanglement measure as a topological summary.  

While probing multipartite entanglement via homology is not a new notion \cite{mainiero2019homological,Baudot_2019,Baudot_Bennequin_2015,Vigneaux_2019}, our methodology outlined here is a novel yet simple mechanism by which to categorify the interaction information and several other entropic quantities. 
This application of persistent homology in the context of multipartite entanglement is to our knowledge unexplored in the literature, and we expect future work to clarify its relation to previous works.

\subsection{Structure of this manuscript}
The remainder of this manuscript is organized as follows.
In Section~\ref{sec:preliminaries} we fix some notation and define the entropic quantities used in this work.
We then review persistent homology and topological summaries in Section~\ref{sec:ph}, giving both an intuitive explanation of the key ideas as well as a formal exposition.
In Section~\ref{sec:main-results} we present our main results:
We first show in Section~\ref{sec:functionals} that the $q$-deformed total correlation is a valid functional for the persistent homology pipeline.
In Section~\ref{sec:correlation-measures} we prove the advertised results on the topological summaries of the persistence complex in terms of the total correlation, and discuss how persistent homology conveys more fine-grained information about the entanglement structure of a quantum state.
In Section~\ref{sec:reduced} we show how strong subadditivity implies a non-positivity condition of the integrated Euler characteristic.
We conclude in Section~\ref{sec:conclusions} with a summary of our results and future directions of research.

\section{Preliminaries}\label{sec:preliminaries}
\subsection{Notation}\label{sec:notation}

Given a finite dimensional Hilbert space $\hilb$ we denote by $\mathcal{B}(\hilb)$ the algebra of linear operators acting on $\hilb$. We set $|\hilb_{A}|\coloneqq \dim \hilb_{A}$, where $A$ is a quantum system, and write $X_{A_{1}\ldots A_{n}}$ for operators in $\mathcal{B}(\hilb_{A_{1}\cdots A_{n} })$, where $\hilb_{A_{1}\cdots A_{n}}\coloneqq \hilb_{A_{1}}\otimes \cdots \otimes \hilb_{A_{n}}$. We equivalently write $X_{A_{1}\cdots A_{n}} \equiv X_{\hal}$ for set $\hal\coloneqq\{A_{1},\ldots,A_{n}\}$. Note the distinction between $|\hilb_{A}|$ and $|A|$, the cardinality of a subset $A\subseteq \hal$. We denote by $\mathcal{P}(\hilb_{A}) \coloneqq \{\rho \in \mathcal{B}(\hilb_{A})\colon \rho \ge 0\}$ the set of positive semidefinite operators on $\hilb_{A}$, and by $\mathcal{D}(\hilb_{A}) \coloneqq \{\rho \in \mathcal{P}(\hilb_{A})\colon \Tr(\rho_{A}) = 1\}$ the set of quantum states or density operators on $\hilb_{A}$.
Throughout the paper $\log$ denotes the natural logarithm.

\subsection{Entropic measures}
For a state $\rho_{A}\in\cD(\cH_A)$ the Tsallis entropy $S_q(\rho_A)\equiv S_{q}(A)_{\rho}$ for $q\in (0,1)\cup(1,\infty)$ is defined as 
\begin{align} 
	S_{q}(A)_{\rho} = \frac{1}{1-q}\left(\Tr \rho_{A}^{q} -1\right).\label{eq:tsallis-entropy}
\end{align}
The quantity $S_{2}(A)_{\rho} = 1-\Tr\rho_A^2$ is called the \textit{linear entropy}, while the von Neumann entropy $S(\rho) = -\Tr\rho\log\rho$ is recovered in the limit $S(A)_{\rho} \equiv \lim_{q\to 1}S_{q}(A)_{\rho}$ \cite{Tsallis1988}. 
We sometimes drop the $\rho$ subscript when the quantum state is clear. 

The quantum relative entropy $D(\rho\|\sigma)$ is defined as
\begin{align}
	D(\rho\|\sigma) = \begin{cases} \Tr(\rho(\log\rho - \log \sigma)),\ & \text{if $\supp\rho \subseteq \supp \sigma$,}\\
		\infty & \text{otherwise,}
		\end{cases}
	\label{eq:quantum-relative-entropy}
\end{align}
where $\supp X$ denotes the orthogonal complement of the kernel of an operator $X$.
The relative entropy is an example of a generalized divergence (see \Cref{sec:conclusions}) as it satisfies the data processing inequality $D(\rho_{AB}\|\sigma_{AB}) \geq D(\rho_A\|\sigma_A)$ for all $\rho_{AB},\sigma_{AB}$ \cite{lieb1973ssa}.

The mutual information of a bipartite state $\rho_{AB}$ is defined as 
\begin{align}
	I(A:B)_\rho = S(A)_\rho + S(B)_\rho - S(AB)_\rho = D(\rho_{AB}\|\rho_A\otimes \rho_B).
\end{align}
It has an operational interpretation as the amount of local noise the parties $A$ and $B$ have to apply in order to remove all correlations between each other \cite{groisman2005correlations}.

There are different generalizations of the mutual information to multipartite systems $\cA=\lbrace A_1,\dots,A_n\rbrace$.
The \emph{total correlation}~\cite{watanabe1960information,herbut2004mutual,groisman2005correlations,wilde2015multipartite} is defined as
\begin{align}
	C(\cA)_\rho \coloneqq \sum_{i=1}^n S(A_i)_\rho - S(\cA)_\rho = D\Big(\rho_\cA\|\bigotimes\nolimits_{i=1}^n\rho_{A_i}\Big).
	\label{eq:total-correlation}
\end{align} 
It enjoys an operational interpretation similar to the mutual information, equalling the total amount of local noise needed to fully decorrelate a multipartite state \cite{groisman2005correlations}.
Another multipartite generalization of the mutual information is the \emph{interaction information}, defined as
\begin{align}
	I(\cA)_\rho \coloneqq \sum_{J\subseteq \cA}(-1)^{|J|-1}S(J)_\rho.
	\label{eq:interaction-information}
\end{align}
Specializing to three parties, the interaction information gives the tripartite information 
\begin{align}
	I(\lbrace A,B,C\rbrace) = S(A) +S(B)+S(C)-S(AB)-S(AC)-S(BC)+S(ABC),
\end{align} 
with applications in the study of many-body physics  \cite{kitaev2006topological,Hosur2016,schnaack2019tripartite}, the AdS/CFT correspondence \cite{Hayden2016,nezami2020multipartite}, and secret sharing \cite{junge2020tripartite}.

\section{Persistent Homology}\label{sec:ph}
\subsection{Intuitive picture}

In our work we view the parties $A_i$ of a multipartite system $\cA = \lbrace A_1,\dots,A_n\rbrace$ as vertices of an abstract simplicial complex, that is, a family of sets closed under taking subsets.
The elements of this simplicial complex are determined by a real-valued functional $F\colon 2^\cA\to\mathbb{R}$ on subsets of the quantum systems, evaluated using a given quantum state on $\cA$.
For a fixed value of a ``filtration parameter'' $\eps\in\mathbb{R}$, we add the simplex $J$ to the complex if $F(J)\leq \eps$.
We require the functional to be monotonic with respect to taking subsets, $F(I)\leq F(J)$ for $I\subseteq J$, such that this indeed defines a valid simplicial complex $G(\eps)$.
This yields a filtration of simplicial complexes $\lbrace G(\eps)\colon \eps\in\mathbb{R}\rbrace$ called the ``persistence complex'', with $G(\eps)\subseteq G(\eps')$ if $\eps\leq \eps'$.

Persistent homology provides a means of keeping track of how the topology changes as a function of the filtration parameter $\eps$ \cite{edelsbrunner2002persistence}.
To this end, we consider the homology groups $H_k(\cdot)$ of the simplicial complexes, which intuitively count the number of $k$-dimensional holes in a topological space.
The inclusion of simplicial complexes $G(\eps)\subseteq G(\eps')$ for $\eps\leq \eps'$ induces homomorphisms $H_k(G(\eps))\to H_k(G(\eps'))$ between homology groups of given dimension $k$, and the $k$-th persistent homology group is defined to be the image of $H_k(G(\eps))$ under this homomorphism \cite{otter2017roadmap}.
A persistent barcode is then a unique collection of intervals represented by stacked lines, one stack for each dimension $k$ \cite{Bauer_Lesnick_2015}.
The width of a line in the barcode corresponds to the interval of the filtration parameter $\eps$ for which the corresponding homology group is non-trivial; informally speaking, it visualizes how long clusters or holes in the topological space persist.
The number of barcodes of dimension $k$ spanning filtration value $\eps$ corresponds to the $k$-th Betti number $\beta_{k}(\eps)$, the rank of the $k$-th homology group at $\eps$.
We refer to Fig.~\ref{fig:example} for a simple example of a persistence complex together with its barcode that has been derived from a graph state on three qubits.

\subsection{Formal definition}\label{sec:formal}
Our setting for applying persistent homology to quantum information is a set $\hal\coloneqq \{A_{1},\ldots,A_{n}\}$ of disjoint quantum systems (parties) to which we associate an \textit{abstract simplicial complex} $\Delta$, which in this work we take as $\Delta = 2^{\hal} \setminus \varnothing$ unless specified otherwise (note that $|\Delta|$ is exponential in the number of parties).
Given a state $\rho \in \mathcal{D}(\hilb_{\hal})$ we consider a functional $F_\rho\colon \Delta \to \mathbb{R}$ that is monotonic under partial trace: \begin{align}
	F(J)_{\rho} \le F(K)_{\rho} \quad \text{if $J \subseteq K$ for $J,K \in \Delta$.}
	\label{eq:monotonicity}
\end{align}
The functional chosen in this work is a deformed version of the total correlation of \eqref{eq:total-correlation}, which we define in Section~\ref{sec:functionals} below.
We also comment on possible alternative choices in Section~\ref{sec:conclusions}.
For a given a state $\rho \in \mathcal{D}(\hilb_{\hal})$ and functional $F_\rho\colon \Delta \to \mathbb{R}$ satisfying the monotonicity condition \eqref{eq:monotonicity}, we define a filtration of simplicial complexes $\lbrace G(\eps)_\rho\colon \eps\in\mathbb{R}\rbrace$ by 
\begin{align}
	G(\varepsilon)_{\rho} = \{J \in \Delta \colon F(J)_{\rho} \le \varepsilon\},
	\label{eq:filtration}
\end{align} 
which satisfies $G(\varepsilon)_{\rho} \le G(\varepsilon')_{\rho}$ for $\varepsilon \le \varepsilon'$. 
Given a field $K$ (in this work we take $K \cong \mathbb{Z}_2$) we compute the \textit{simplicial homology} $H_{k}(G(\varepsilon))$ as follows.
The chain group $\cC_k$ is defined as the group of simplicial $k$-chains of the form $\sum_{j=1}^{n}k_j \sigma_j$, where $k_j \in K$ and $\sigma_j$ is an (oriented) $k$-simplex. 
The boundary operator $\partial_k\colon \cC_k \to \cC_{k-1}$ is a homomorphism acting as $\partial_k (\sigma) = \sum_{i=0}^k (-1)^i(v_0,\ldots,\hat{v}_i,\ldots,v_k)$, where $v_i$ are the vertices of $G(\varepsilon)$ and $\hat{v}_i$ denotes omission. 
We have $\partial_{k}\circ\partial_{k+1} = 0$ and hence $\text{im } \partial_{k+1} \leq \ker \partial_{k}$.
The homology group $H_k \cong \ker \partial_{k} / \text{im } \partial_{k+1}$ identifies the representative chain elements $c_k \in \cC_{k}$ that cannot be written as the boundary of a chain element $c_{k+1} \in \cC_{k+1}$.
Its elements are formally equivalence classes of $k$-chains. 

The inclusion mapping at the level of simplicial complexes induces a homomorphism at the level of the homology groups: $f_{\ast}^{kl}\colon H_\ast(G(\varepsilon_k)) \to H_\ast(G(\varepsilon_l))$, satisfying (1) $f_{\ast}^{kk} $ equates to the identity map and (2) $f_{\ast}^{km} = f_{\ast}^{lm}\circ f_{\ast}^{kl}$ for all $k\le l \le m$. 
This collection of homology groups and homomorphisms indexed by real numbers is termed the \textit{persistence module} $P$ \cite{zomorodian2005persistent}.\footnote{ 
This construction can also be understood in terms of category theory: Define an $\mathbb{R}$-indexed \textit{sublevel set} filtration functor $G\colon\mathbb{R} \to \text{\textbf{Simp}}$ by $G(\varepsilon)_{\rho} = \{J \in \Delta | F(J)_{\rho} \le \varepsilon\} $, which satisfies $G(\varepsilon)_{\rho} \le G(\varepsilon')_{\rho}$ when $\varepsilon \le \varepsilon'$. 
Here, \textbf{Simp} denotes the category of simplicial complexes. 
Denoting by $H_{k}$ the $k$-th homology functor with coefficients in a field $K$, and by \textbf{Vec} the category of vector spaces over $K$, the functor $P_{k} \equiv H_{k}G\colon \mathbb{R} \to \text{\textbf{Vec}}$ can be identified with the persistence module defined above.}
In particular, $P_{k}(\varepsilon)_{\rho}$ is the $k$-th simplicial homology group for the subcomplex $G(\varepsilon)_{\rho}$. 
The $k$-th Betti number $\beta_{k}(\varepsilon)_{\rho}$ for $\varepsilon \in \mathbb{R}$ is defined as $\rk P_{k}(\varepsilon)_{\rho}$.
 
Under a mild finite-type assumption on the chain groups (which is always satisfied for the complexes considered in this work), the persistence module can be uniquely represented by a finite multiset of intervals in $\mathbb{R}$ termed a \textit{barcode} \cite{zomorodian2005persistent}.
This barcode representation is described by a set of tuples $\{(b,d,k,m)\}$, where $(b,d) \subseteq \mathbb{R}$ denotes an interval, $k$ the homological dimension, and $m$ indexing possible degeneracies. 
The interval $(b,d)$ functionally corresponds to a pair of simplices $(J,K)$ that create and destroy a homology group, respectively.

Informally, a barcode encodes intervals of $\mathbb{R}$ over which homology groups \textit{persist}. 
A barcode can equivalently be represented as a \textit{persistence diagram} $\text{PD}_k$, which depicts the intervals $(b,d)$ from a fixed homological dimension $k$ as points in $\mathbb{R}^{2}$. 
For a persistence diagram we refer to the abscissa and ordinate as the ``birth'' and ``death'' times, respectively. 
The distance of a point from the diagonal $y=x$ is proportional to what is termed the \textit{lifetime} $l\coloneqq |d-b|$, serving as a crude quantifier of the ``importance'' of a homology groups: the larger the lifetime, the more ``global'' a topological feature. 
We depict an example barcode of a persistence complex in Fig.~\ref{fig:example}. We note that persistence diagrams often include infinite intervals of the form $(b,\infty)$; in practice, these bars are either ignored or addressed via reduced or relative homology, as discussed below.

\begin{figure}
	\centering
	\includegraphics[width=0.9\textwidth]{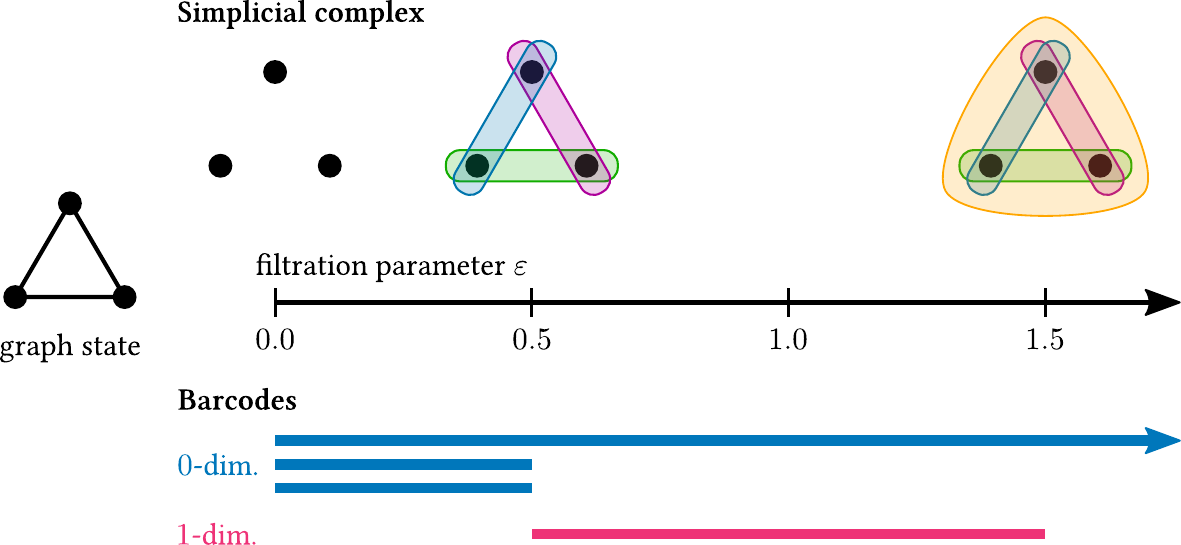}
	\caption{A simple example of a persistence complex induced by the functional $C_2$ defined in~\eqref{eq:deformed-total-correlation}, built from the graph state corresponding to the complete graph on three vertices. The top row shows the simplices that are added to the complex at filtration values $\eps=\lbrace 0,0.5,1.5\rbrace$. The bottom row shows the corresponding barcodes computed from the persistence complex. Evidently, a $1$-dimensional hole persists for $0.5\leq\eps\leq 1.5$.
		At $\eps=1.5$ the final $3$-simplex is added to the complex, which closes the hole. The top $0$-dimensional barcode (blue with arrow) corresponds to the connected component of all vertices which persists for all times.}
	\label{fig:example}
\end{figure}

\subsection{Topological Summaries}\label{sec:topsumm}
The methodology formalized here is a ``persistent homology pipeline'' that maps quantum states to persistence modules $P$, uniquely represented (for the single-parameter case) as a persistence diagram (measure) $\text{PD}_i$ or barcode. 
In practical applications, persistence diagrams can be further analyzed using a so-called \textit{topological summary} (TS) $T\colon P \to \mathbb{R}$. This typically involves the integration of a suitable function with respect to the continuous filtration parameter $\eps$, one important example of which is the \textit{integrated Betti number} $\mathfrak{B}_{k}$:
\begin{align}
    \mathfrak{B}_{k}(\varepsilon) &= \int_{0}^{\varepsilon} \beta_{k}(\varepsilon')d\varepsilon'.
\end{align}
In plain language, $\mathfrak{B}_k(\varepsilon)$ is the integral of the Betti number $\beta_k(\eps)$ defined in \Cref{sec:formal} up to filtration value $\varepsilon$. 
The \textit{total persistence} is given by the sum of the integrated Betti numbers:
\begin{align}
	\mathfrak{B}(\varepsilon) \coloneqq \sum_{k=0}^{\infty}\mathfrak{B}_{k}(\varepsilon).
\end{align}
As we soon demonstrate, the most interesting topological summary in this work is given by the \textit{integrated Euler characteristic} (IEC) \cite{bobrowski2012euler}:
\begin{align}
    \mathfrak{X}(\varepsilon)\coloneqq \sum_{k=0}^{\infty}(-1)^{k}\mathfrak{B}_{k}(\varepsilon).
\end{align}
As the name suggests, the IEC is the integration of the \textit{Euler characteristic} with respect to the filtration.

The quantities $\mathfrak{B}(\eps)$ and $\mathfrak{X}(\eps)$ defined above generally diverge in the limit $\varepsilon \to \infty$ due to the presence of persistence intervals of the form $(b,\infty)$. 
These bars arise naturally in dimension $k=0$, as the connected components of $\Delta$ persist indefinitely. 
To deal with this divergence, we use the \textit{reduced} persistent homology, obtained by defining an augmented chain complex 
\begin{align}
    \ldots \xrightarrow[]{\partial_{1}} \cC_{1} \xrightarrow[]{\partial_{0}} \cC_{0}  \xrightarrow[]{\varphi} \mathbb{Z} \to 0,
\end{align}
such that $\varphi(\sum_{k=1}^{m}\sigma_{k}) = m \mod 2$ when choosing $\mathbb{Z}_2$ as the field of coefficients. 
Essentially, the result of this definition is a redefinition of the Betti numbers as 
\begin{align}
    \widetilde{\beta}_{k}(\varepsilon) \to \begin{cases}
        \beta_{k}(\varepsilon), & k > 0 \\
        \max(\beta_{k}(\varepsilon)- 1, 0), & k = 0.
    \end{cases}
\end{align}

We use tilde notation to indicate the reduced homology, e.g., $\widetilde{\mathfrak{B}}_k$ denotes the reduced integrated Betti number, and $\widetilde{\mathfrak{X}}(\eps)$ the reduced integrated Euler characteristic.
The reduced homology is especially useful if $F(K)_{\rho}=0$ for $K \in V(\Delta)$ (i.e., the vertex set $V(\Delta)$ appears at $\varepsilon = 0$ in the filtration), which will be the case for our choice of function discussed in Section~\ref{sec:functionals} below.
In such cases, the removal of the zero-dimensional infinite bar $(0,\infty)$ (corresponding to the connected component in $\Delta$) does not alter any topological summary apart from disregarding the infinite addend.

\section{Main results}\label{sec:main-results}

\subsection{Choice of Functionals}\label{sec:functionals}
Recall from \eqref{eq:filtration} that the filtration $\lbrace G(\eps)_\rho\colon \eps\in\mathbb{R}\rbrace$ is defined in terms of the choice of field $K$ (typically taken to be $\mathbb{Z}_2$) used to compute homology, the complex $\Delta$ (in this paper we choose $2^{\hal}\setminus \varnothing$), and a functional $F_\rho\colon\Delta\to\mathbb{R}$. 
In this work we choose as $F$ a $q$-deformed version of the total correlation measure defined in \eqref{eq:total-correlation}.
For a quantum state $\rho\in\cD(\cH_\cA)$ and a subset $J\subseteq \cA$, the \emph{$q$-deformed total correlation} for $q>1$ is defined as
\begin{align}\label{eq:deformed-total-correlation}
	C_{q}(J)_{\rho} =  \sum_{v \in J}S_{q}(v)_{\rho} - S_{q}(J)_{\rho},
\end{align}
where the sum runs over all vertices $v$ in the simplex $J$, and $S_q$ is the Tsallis entropy defined in~\eqref{eq:tsallis-entropy}.
In the limit $q \to 1$, Eq.~\eqref{eq:deformed-total-correlation} gives the total correlation measure from \eqref{eq:total-correlation},
\begin{align}\label{eq_rel_ent_filt}
	C(J)_{\rho}\coloneqq \lim_{q\to 1} C_{q}(J)_{\rho} = D\Big(\rho_J\|\bigotimes\nolimits_{v \in J}\rho_v\Big).
\end{align} 
As an example, for $J=\{A,B\}$ we have $C(J) = I(A:B) = S(A)+S(B)-S(AB)$, the mutual information of the quantum state $\rho_{AB}$.
For $J=\{A,B,C\}$, we obtain
\begin{align}
	C(\{A,B,C\})_{\rho} = D(\rho_{ABC} \| \rho_{A}\otimes\rho_{B}\otimes\rho_{C}) = S(A)+S(B)+S(C)-S(ABC).
\end{align} 
The total correlation $C(J)_{\rho}$ has a pleasing operational interpretation as the total amount of local noise that the individual parties in $J$ have to apply one by one in order to decouple their systems from the rest \cite{groisman2005correlations}.

The case $q=2$ corresponds to using the linear entropy $S_{2}(J)_{\rho}$ in Eq.~\eqref{eq:deformed-total-correlation}. 
We will see in Section~\ref{sec:correlation-measures} below that this measure is intimately related to an entanglement measure called the $n$-tangle \cite{coffman2000distributed,wong2001multiparticle}.

We first show that $C_{q}(J)_{\rho}$ is monotonic under partial trace and thus determines a proper sublevel set filtration for all $q\geq 1$, as explained in Section~\ref{sec:ph}:
\begin{lemma}
	Let $\rho\in\cD(\cH_\cA)$ and $q\geq 1$.
	If $I\subseteq J\subseteq \cA$, then
	\begin{align}
		C_{q}(I)_{\rho} \leq C_q(J)_{\rho}.
	\end{align}
\end{lemma}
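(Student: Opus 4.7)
The plan is to reduce the inequality $C_q(I)_\rho \leq C_q(J)_\rho$ to iterated subadditivity of the Tsallis entropy, applied one vertex at a time. Computing the difference directly from the definition \eqref{eq:deformed-total-correlation}, one finds
\begin{align*}
    C_q(J)_\rho - C_q(I)_\rho \;=\; \sum_{v \in J\setminus I} S_q(v)_\rho \;-\; \bigl[S_q(J)_\rho - S_q(I)_\rho\bigr],
\end{align*}
since the contributions $\sum_{v\in I} S_q(v)_\rho$ cancel. Hence the claim is equivalent to the ``global'' subadditivity statement
\begin{align*}
    S_q(J)_\rho \;\leq\; S_q(I)_\rho + \sum_{v \in J\setminus I} S_q(v)_\rho.
\end{align*}

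To obtain this, I would enumerate $J\setminus I = \{v_1,\dots,v_k\}$, set $I_j \coloneqq I \cup \{v_1,\dots,v_j\}$, and apply the single-vertex subadditivity $S_q(I_{j-1} \cup \{v_j\})_\rho \leq S_q(I_{j-1})_\rho + S_q(v_j)_\rho$ for $j=1,\dots,k$. Telescoping along the chain $I = I_0 \subseteq I_1 \subseteq \dots \subseteq I_k = J$ then delivers the displayed inequality, and substituting back proves the lemma.

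This reduces the entire problem to subadditivity of the Tsallis entropy, $S_q(AB)_\rho \leq S_q(A)_\rho + S_q(B)_\rho$, across a bipartition, for all $q \geq 1$. For $q=1$ this is the usual subadditivity of the von Neumann entropy. For $q>1$, multiplying through by $q-1>0$ recasts the statement as $\Tr \rho_{AB}^q + 1 \;\geq\; \Tr \rho_A^q + \Tr \rho_B^q$, which is Audenaert's subadditivity theorem for quantum $q$-entropies, to be invoked directly; as a sanity check, on product states it collapses to $(\Tr\rho_A^q - 1)(\Tr\rho_B^q - 1)\geq 0$, which is true because $\Tr\rho^q \in [0,1]$ for $q\geq 1$.

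The main obstacle is precisely this subadditivity of $S_q$ in the genuinely quantum, $q>1$ regime: unlike the classical case, it does not follow from concavity arguments or from the $q\to 1$ limit of von Neumann subadditivity, and requires a nontrivial operator-theoretic input. Once cited, the rest of the proof is a short formal manipulation of the sum in \eqref{eq:deformed-total-correlation}, with the induction on $|J\setminus I|$ serving only to book-keep the telescoping.
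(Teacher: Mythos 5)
Your proposal is correct and follows essentially the same route as the paper: both reduce the claim to the inequality $S_q(J)_\rho \leq S_q(I)_\rho + \sum_{v\in J\setminus I} S_q(v)_\rho$, obtained by iterating the bipartite subadditivity of the Tsallis entropy (Audenaert's result for $q>1$, von Neumann subadditivity for $q=1$). Your one-vertex-at-a-time telescoping is just a more explicit bookkeeping of the same iteration the paper performs in a single chained inequality.
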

\begin{proof}
	The Tsallis entropy $S_q$ is subadditive for $q>1$ \cite{audenaert2007tsallis}, which shows that
	\begin{align}
		S_q(J) &\leq S_q(I) + S_q(J\setminus I) \leq S_q(I) + \sum_{v\in J\setminus I} S_q(v).\label{eq:subadditivity}
	\end{align}
	We then have
	\begin{align}
		C_q(J) - C_q(I) &= \sum_{x\in J} S_q(x) - S_q(J) - \sum_{y\in I} S_q(y) + S_q(I)\\
		&= \sum_{v\in J\setminus I} S_q(v) + S_q(I) - S_q(J) \geq 0
	\end{align}
	by \eqref{eq:subadditivity}, which proves the claim for $q>1$.
	For $q=1$, we may use the subadditivity $S(AB)\leq S(A)+S(B)$ of the von Neumann entropy directly.
\end{proof}

It is evident from definition \eqref{eq:deformed-total-correlation} that the $q$-deformed total correlation $C_q$ is invariant under local unitaries (LU): Let $U = U_1 \otimes \dots\otimes U_n$ for some local unitaries $U_i$ acting on the $i$-th system $A_i$, respectively.
Then for any $J\subseteq \cA = \lbrace A_1,\dots,A_n\rbrace$ and an arbitrary state $\rho\in\cD(\cH_\cA)$ we have $C_q(J)_\rho = C_q(J)_{U\rho U^\dagger}$, which follows from the unitary invariance of the Tsallis entropy $S_q$.
Hence, LU-equivalent states have identical barcodes, the converse of which we record for future reference.
\begin{lemma}\label{lem:different-barcodes-LU-inequivalent}
	If $\rho$ and $\sigma$ are two quantum states with different barcodes with respect to the deformed total correlation, then $\rho$ and $\sigma$ are LU-inequivalent.
\end{lemma}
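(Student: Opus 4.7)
The plan is to prove the contrapositive: if $\rho$ and $\sigma$ are LU-equivalent, then they produce identical barcodes with respect to the deformed total correlation $C_q$. This reduces the lemma to the observation made in the paragraph just above its statement, but it is worth spelling out the argument carefully so that the filtration-level equality really does pass to the barcode.

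First I would fix a local unitary $U = U_1 \otimes \cdots \otimes U_n$ acting on $\cH_\cA$ and assume $\sigma = U \rho U^\dagger$. For any simplex $J \in \Delta = 2^\cA \setminus \varnothing$, the partial trace of $\sigma$ onto $J$ equals $U_J \rho_J U_J^\dagger$ with $U_J = \bigotimes_{A_i \in J} U_i$, since the unitaries on $\cA \setminus J$ cancel under the trace. Unitary invariance of the Tsallis entropy $S_q$ then gives $S_q(J)_\sigma = S_q(J)_\rho$ and, as a special case, $S_q(v)_\sigma = S_q(v)_\rho$ for each vertex $v \in J$. Substituting into the definition~\eqref{eq:deformed-total-correlation} yields $C_q(J)_\sigma = C_q(J)_\rho$ for every $J \in \Delta$.

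Second, because the functional $C_q$ takes the same value on $\rho$ and $\sigma$ on every simplex, the sublevel-set complexes from~\eqref{eq:filtration} coincide at every filtration value:
\begin{align}
G(\varepsilon)_\rho = \{ J \in \Delta : C_q(J)_\rho \le \varepsilon \} = \{ J \in \Delta : C_q(J)_\sigma \le \varepsilon \} = G(\varepsilon)_\sigma
\end{align}
for all $\varepsilon \in \mathbb{R}$. Hence the two persistence complexes are identical as filtered simplicial complexes, and in particular the induced persistence modules $P_k(\varepsilon)_\rho$ and $P_k(\varepsilon)_\sigma$ agree in every homological degree $k$. Invoking the structure theorem for persistence modules of finite type (the finite-type hypothesis holds automatically since $\Delta$ is finite), the barcode is a complete invariant of the persistence module, so $\rho$ and $\sigma$ share the same barcode. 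Contrapositively, distinct barcodes force LU-inequivalence.

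There is no genuine obstacle in this proof: the only nontrivial input is the unitary invariance of the Tsallis entropy, which was already used implicitly in the paragraph preceding the lemma, and the uniqueness of barcode decompositions for finite-type persistence modules, which is a standard fact cited earlier in Section~\ref{sec:formal}.
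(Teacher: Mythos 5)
Your proposal is correct and takes essentially the same route as the paper, which proves the contrapositive by noting that LU-invariance of the Tsallis entropy makes $C_q(J)_\rho = C_q(J)_{U\rho U^\dagger}$ for every simplex $J$, hence identical filtrations and barcodes. Your write-up merely fills in the details (the marginal identity $\sigma_J = U_J\rho_J U_J^\dagger$ and the uniqueness of the barcode for finite-type persistence modules) that the paper leaves implicit in the paragraph preceding the lemma.
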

We note that a barcode derived from the deformed total correlation does \emph{not} identify a quantum state, say, up to LU-equivalence.
This follows from the facts that the deformed total correlation $C_q$ is computed from the eigenvalues of the marginals of a quantum state, and that there are pairs of LU-inequivalent states whose marginals all have identical spectra.
A particular example of this is discussed at the end of Section~\ref{sec:correlation-measures}.
Functions of the spectra of marginals of a multipartite quantum state are examples of so-called \emph{LU invariants}.
In principle, LU invariants can be used to determine a quantum state up to LU equivalence, but this quickly becomes intractable for a growing number of parties \cite{rains2000polynomial,grassl1998local}.
Even if one relaxes LU equivalence to SLOCC equivalence (see Section~\ref{sec:correlation-measures}), a complete classification of multipartite entanglement is intractable except for small system sizes~\cite{Dur_Vidal_Cirac_2000,verstraete2002four,briand2003,osterloh2005constructing,luque2005algebraic,osterloh2006entanglement,klyachko2007dynamical}.
This complexity renders a complete characterization of multipartite entanglement generally infeasible, and one has to resort to approaches designed to capture certain aspects of entanglement.

Our approach based on persistent homology links entanglement properties of a multipartite state to topological features of a simplicial complex constructed from the state and its marginals.
This process can be understood as a certain ``discretization'' of entanglement features that is nevertheless stable with respect to perturbations in the following sense.
The Tsallis entropy $S_q$ is continuous with respect to the trace distance $\|\rho-\sigma\|_1\coloneqq \Tr |\rho-\sigma|$ (where $|X|\coloneqq \sqrt{X^\dagger X}$) on quantum states (see, e.g., \cite{raggio1995entropies}).
This also implies continuity for the $q$-deformed total correlation $C_q$ defined in \eqref{eq:deformed-total-correlation} above.
We then have the following stability property of the persistence diagrams \cite{Cohen-Steiner2007,Skraba_Thoppe_Yogeshwaran_2020}:
Given two quantum states $\rho,\sigma$ that are close in trace distance, the persistence diagrams of the persistence complexes of $\rho,\sigma$ defined in terms of $C_q$ are close in the bottleneck distance.
In particular, any of the topological summaries defined in \Cref{sec:topsumm} is continuous with respect to the quantum state defining the persistence complex.

Finally, we define a $q$-deformed version of the interaction information introduced in \eqref{eq:interaction-information}.
For a quantum state $\rho\in\cD(\cH_\cA)$ and $q>1$,
\begin{align}
	I_{q}(\hal)_\rho \coloneqq \sum_{J\subseteq \hal}(-1)^{|J|-1}S_q(J)_\rho, 
	\label{eq:deformed-interaction-information}
\end{align} 
where we set $S_q(\varnothing)_\rho = 0$.
We will see in the next section that the $q$-deformed interaction information $I_q$ arises as the topological summary of a persistence complex built from the $q$-deformed total correlation $C_q$.

We note that the functional in Eq.~\eqref{eq:deformed-interaction-information} corresponds to a particular parameterization of a ``state index" function proposed in the recent work \cite{mainiero2019homological} that also leveraged homological tools to understand multipartite entanglement. 
Therein, the state index function was proposed as the ``Euler characteristic'' of some associated non-commutative geometry attached to a density state \cite{mainiero2019homological}. As we demonstrate below, this quantity does indeed correspond to an \textit{integrated} Euler characteristic in the context of \textit{persistent} homology.

\subsection{Correlation measures as topological summaries of persistence complexes}\label{sec:correlation-measures}

We now show that multipartite entanglement measures and inequalities may arise as topological summaries of persistence diagrams. 
Recalling that we set $\Delta = 2^{\hal}\setminus \varnothing$, we clearly have $\beta_{i>0}(\Delta) = 0 = 1-\beta_{0}(\Delta)$. 
By computing the reduced persistent homology using the $q$-deformed total correlation functional $C_q(J)_{\rho}$ defined in \eqref{eq:deformed-total-correlation} above, we obtain our first main result:
\begin{theorem}
	\label{thm:interaction-information}
	Let $\rho\in\cD(\cH_\cA)$ be a quantum state on $\cA=\lbrace A_1,\dots,A_n\rbrace$ and let $P$ be the corresponding persistence module defined via \eqref{eq:filtration} in terms of the $q$-deformed total correlation $C_q$ in \eqref{eq:deformed-total-correlation}.
	Then the reduced integrated Euler characteristic $\widetilde{\mathfrak{X}}_{q}(\infty)$ of this complex is equal to the $q$-deformed interaction information $I_q$ in \eqref{eq:deformed-interaction-information}:
\begin{align}\label{eq:IEC-equals-total_correlation}
    \widetilde{\mathfrak{X}}_{q}(\infty) = \sum_{J \subseteq \hal}(-1)^{|J|-1}S_{q}(J)_{\rho} \eqqcolon I_{q}(\hal)_{\rho}.
\end{align} 
\end{theorem}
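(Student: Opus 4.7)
The plan is to translate the topological sum defining $\widetilde{\mathfrak{X}}_q(\infty)$ into a purely combinatorial sum over simplices and then perform the integration in closed form via a Cavalieri-type identity. First I would apply the Euler--Poincar\'e formula at each filtration value $\eps$ to rewrite the alternating sum of Betti numbers as a simplex count,
\begin{align*}
	\sum_{k\ge 0}(-1)^k\beta_k(\eps)_\rho = \chi(G(\eps)_\rho) = \sum_{\substack{J\subseteq \cA \\ J\neq\varnothing}}(-1)^{|J|-1}\,\mathds{1}[C_q(J)_\rho\le\eps],
\end{align*}
using that a $k$-simplex corresponds to a subset $J$ with $|J|=k+1$, and that $J\in G(\eps)_\rho$ iff $C_q(J)_\rho\le\eps$. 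Swapping the (finite) sum over $k$ with the integral in the definition of $\mathfrak{B}_k$ then gives the continuous form $\mathfrak{X}_q(\eps) = \int_0^\eps\chi(G(\eps')_\rho)\,d\eps'$.

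Next I would deal with the passage to reduced homology and to $\eps\to\infty$. For $\eps$ large enough, every non-empty $J\subseteq\cA$ lies in $G(\eps)_\rho$, and the binomial identity $\sum_{J\subseteq\cA, J\neq\varnothing}(-1)^{|J|-1}=1$ yields $\chi(G(\eps)_\rho)\to 1$. The reduction $\widetilde\beta_0=\beta_0-1$ subtracts precisely this limit, so $\widetilde\chi(G(\eps)_\rho)=\chi(G(\eps)_\rho)-1$ for $\eps\ge 0$ (where all vertices are already present since $C_q(\{v\})_\rho=0$), and consequently $\widetilde{\mathfrak{X}}_q(\infty)$ is finite. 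Writing $\mathds{1}[C_q(J)_\rho\le\eps]-1=-\mathds{1}[C_q(J)_\rho>\eps]$ and applying the Cavalieri identity $\int_0^\infty \mathds{1}[f>\eps]\,d\eps=f$ to each non-negative $C_q(J)_\rho$ gives
\begin{align*}
	\widetilde{\mathfrak{X}}_q(\infty) \;=\; -\sum_{\substack{J\subseteq\cA \\ J\neq\varnothing}}(-1)^{|J|-1}\,C_q(J)_\rho.
\end{align*}

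Finally, I would expand $C_q(J)_\rho=\sum_{v\in J}S_q(v)_\rho-S_q(J)_\rho$ and show that the single-vertex contributions cancel. Reordering the sum by $v$ and invoking $\sum_{J\ni v}(-1)^{|J|-1}=-\sum_{J'\subseteq\cA\setminus\{v\}}(-1)^{|J'|}=0$ (valid for $|\cA|\ge 2$ by the binomial identity on the non-empty set $\cA\setminus\{v\}$) kills all the $S_q(v)_\rho$ terms. What remains is exactly $\sum_{J\subseteq\cA, J\neq\varnothing}(-1)^{|J|-1}S_q(J)_\rho = I_q(\cA)_\rho$, where we use the convention $S_q(\varnothing)_\rho=0$. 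The only real obstacle is bookkeeping: keeping track of the sign flips, confirming that the reduction $\widetilde\chi=\chi-1$ is exactly what removes the infinite bar $(0,\infty)$ in degree zero, and noting that one must restrict to $|\cA|\ge 2$ for the vertex-cancellation identity (the single-party case is degenerate and trivially consistent). Continuity of $S_q$ and the finiteness of $\Delta = 2^{\cA}\setminus\varnothing$ make all sum--integral interchanges elementary.
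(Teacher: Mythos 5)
Your proposal is correct and follows essentially the same route as the paper's proof: both convert the alternating sum of Betti numbers into a signed count of simplices via the Euler--Poincar\'e formula, integrate over the filtration parameter so that each simplex $J$ contributes its birth time $C_q(J)_\rho$, and then cancel the single-vertex entropy terms with the binomial identity to leave $I_q(\cA)_\rho$. Your handling of the reduction (subtracting $1$ from $\chi$ and invoking the layer-cake identity to integrate directly to $\infty$) is a marginally cleaner packaging of the paper's step of integrating up to $\eps_{\mathrm{max}}$ and then discarding the $\eps_{\mathrm{max}}$ term, but it is not a genuinely different argument.
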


\begin{proof}
	\newcommand{\emax}{{\eps_{\mathrm{max}}}}
	Let us denote by $\emax$ the largest value of the filtration parameter at which the number of barcodes changes, and let $n_k(\eps)$ be the number of $k$-simplices existing at time $\eps$.
	The Euler characteristic $\mathfrak{X}(\eps)$ for fixed $\eps$ can be expressed in terms of the $n_k$ as \cite[Thm.~2.44]{MR1867354}
	\begin{align}
		\mathfrak{X}(\eps) = \sum_{k=0}^{\infty} (-1)^k n_k (\eps).
	\end{align}
	For the integrated Euler characteristic we thus obtain
	\begin{align}
		\mathfrak{X}(\emax) = \int_0^\emax \sum_{k=0}^\infty (-1)^k n_k(\eps) \,d\eps = \sum_{k=0}^\infty (-1)^k  \int_0^\emax n_k(\eps) \,d\eps.\label{eq:euler-char}
	\end{align}
	Recall that, for given $\eps$, a simplex $J$ is added to the complex if $C_q(J)\leq \eps$.
	Any simplex $J$ is thus ``born'' at time $C_q(J)$ and contributes $(\emax - C_q(J))$ to the integral.
	Hence, denoting by $\Delta_k$ the (complete) set of $k$-simplices, we have
	\begin{align}
		\int_0^\emax n_k(\eps) \,d\eps = \sum_{J\in\Delta_k} (\emax - C_q(J)) = \binom{n}{k+1} \emax - \sum_{J\in\Delta_k}C_q(J).
	\end{align}
	Substituting this in \eqref{eq:euler-char} gives
	\begin{align}
		\mathfrak{X}(\emax) &= \emax \sum_{k=0}^\infty(-1)^k \binom{n}{k+1} - \sum_{k=0}^\infty (-1)^k \sum_{J\in\Delta_k} C_q(J)\\
		&= \emax - \sum_{J\in\Delta} (-1)^{|J|-1} C_q(J),\label{eq:intermediate}
	\end{align}
    where we used the binomial identity $\sum_{j=0}^n (-1)^j \binom{n}{j}=0$ in the last equality.
    
	We calculate the second term in \eqref{eq:intermediate} using the definition $C_q(J)=\sum_{x\in J} S_q(x) - S_q(J)$ of the $q$-deformed total correlation:
	\begin{align}
		- \sum_{J\in\Delta} (-1)^{|J|-1} C_q(J) &= - \sum_{J\in\Delta} (-1)^{|J|-1} \sum_{x\in J} S_q(x) + \sum_{J\in\Delta} (-1)^{|J|-1} S_q(J)\\
		&= \sum_{J\in\Delta} (-1)^{|J|-1} S_q(J).
	\end{align}
	The second equality follows from $\sum_{J\in\Delta} (-1)^{|J|} \sum_{x\in J} S_q(x) = 0$, which can be proved using induction over the number of vertices in $\Delta$.
	In summary, we obtain the following formula for the integrated Euler characteristic:
	\begin{align}
		\mathfrak{X}(\emax) = \emax  + \sum_{J\in\Delta} (-1)^{|J|-1} S_q(J).\label{eq:IEC-nonreduced}
	\end{align}
 
    Taking the reduced homology amounts to omitting in $\mathfrak{X}_q(\eps_{\mathrm{max}})$ the contribution of one of the $0$-simplices.
    This removes the $\emax$ term in \eqref{eq:IEC-nonreduced}, and so we arrive at 
    \begin{align}
    \widetilde{\mathfrak{X}}_{q}(\infty) = \sum_{J \subseteq \hal}(-1)^{|J|-1}S_{q}(J)_{\rho},
    \end{align}
    which concludes the proof.
\end{proof}

We note that the result of \Cref{thm:interaction-information} does not depend on the choice of field $K$ made in \Cref{sec:formal}.
The proof of \Cref{thm:interaction-information}, in particular \eqref{eq:intermediate}, reveals an explicit formula for the reduced integrated Euler characteristic $\widetilde{\mathfrak{X}}_F(\infty)$ of a persistence complex defined in terms of an \emph{arbitrary} functional $F$.
This may help in identifying other functionals $F$ that give rise to persistence complexes with interesting topological summaries such as the reduced integrated Euler characteristic.
We thus record this observation as a stand-alone result:
\begin{theorem}
	Let $\rho \in \mathcal{D}(\hilb_{\hal})$ be a quantum state and let $F_\rho\colon 2^{\hal} \setminus \varnothing \to \mathbb{R}$ be a functional satisfying monotonicity under partial trace \eqref{eq:monotonicity}.
	Then the reduced integrated Euler characteristic $\widetilde{\mathfrak{X}}_{F_\rho}(\infty)$ of the persistence complex defined via \eqref{eq:filtration} is given by the following formula:
	\begin{align}
		\widetilde{\mathfrak{X}}_{F_\rho}(\infty) = \sum_{J\subseteq \hal} (-1)^{|J|} F(J)_\rho.
	\end{align}
\end{theorem}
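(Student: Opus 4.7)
The plan is to retrace the argument used in the proof of \Cref{thm:interaction-information} and observe that essentially none of its first half makes use of the specific form of $C_q$: monotonicity under partial trace is enough to guarantee that each simplex $J$ first enters the filtration at the level $F(J)_\rho$, and after that the computation reduces to pure bookkeeping of simplex counts. Only the final algebraic manipulation in the earlier proof, where the authors expand $C_q(J) = \sum_{x\in J} S_q(x) - S_q(J)$ and cancel the linear sum over vertices, uses the explicit shape of $C_q$. So the strategy is to stop the earlier derivation at its intermediate formula \eqref{eq:intermediate} and simply replace $C_q$ by an arbitrary monotone functional $F$.

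Concretely, I would first write the instantaneous Euler characteristic of the finite complex $G(\eps)_\rho$ as $\mathfrak{X}_F(\eps) = \sum_{k\ge 0}(-1)^k n_k(\eps)$, with $n_k(\eps)$ the number of $k$-simplices present at time $\eps$. Because $F$ is monotone under partial trace, $G(\eps)_\rho = \{J\in\Delta : F(J)_\rho \le \eps\}$ is a valid abstract simplicial complex for every $\eps$, and each $J\in\Delta_k$ is born at $F(J)_\rho$ and remains present thereafter. Integrating up to $\eps_{\max}$, the largest filtration value at which the barcode changes, gives
\begin{align}
\int_0^{\eps_{\max}} n_k(\eps)\,d\eps = \sum_{J\in\Delta_k}(\eps_{\max} - F(J)_\rho) = \binom{n}{k+1}\eps_{\max} - \sum_{J\in\Delta_k} F(J)_\rho.
\end{align}
Multiplying by $(-1)^k$, summing over $k$, and applying the identity $\sum_{k=0}^{n-1}(-1)^k\binom{n}{k+1} = 1$ then produces
\begin{align}
\mathfrak{X}_F(\eps_{\max}) = \eps_{\max} + \sum_{J\in\Delta}(-1)^{|J|} F(J)_\rho.
\end{align}

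To finish, I would pass to the reduced version via the same observation used at the end of the earlier proof: switching to the augmented chain complex decrements $\widetilde\beta_0$ by one once the complex becomes non-empty, which cancels the free $\eps_{\max}$ contribution and yields the claimed formula
\begin{align}
\widetilde{\mathfrak{X}}_{F_\rho}(\infty) = \sum_{J\subseteq\hal}(-1)^{|J|} F(J)_\rho,
\end{align}
with the convention that the empty set contributes nothing. I do not anticipate any substantive obstacle beyond making this last step careful: the cancellation works cleanly provided all vertices are born at filtration value zero, which is precisely the regime singled out in \Cref{sec:topsumm} where reduction ``does not alter any topological summary apart from disregarding the infinite addend'' whenever $F(K)_\rho = 0$ for $K\in V(\Delta)$. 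Outside that regime, the integrated Euler characteristic would pick up an additional boundary correction governed by the smallest vertex birth time; it is worth flagging this hypothesis explicitly in the statement, but it holds automatically for the choices of $F$ considered in the rest of the paper.
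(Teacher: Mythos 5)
Your proposal is correct and follows essentially the same route as the paper, which in fact gives no separate proof of this theorem but derives it directly from the intermediate formula \eqref{eq:intermediate} in the proof of \Cref{thm:interaction-information}, exactly as you do. Your closing caveat---that cancelling the $\eps_{\mathrm{max}}$ term against the reduced $0$-dimensional bar requires the vertices to be born at filtration value zero (i.e.\ $F(v)_\rho=0$ for all $v\in V(\Delta)$, which holds for $C_q$ but not for an arbitrary monotone $F$)---is a legitimate hypothesis that the paper's statement leaves implicit, so flagging it is a small improvement rather than a deviation.
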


Going back to Theorem~\ref{thm:interaction-information}, we note that Eq.~\eqref{eq:IEC-equals-total_correlation} identifies a correspondence between the $q$-deformed total correlation $C_q(J)_{\rho}$ and the total mutual information $I_q(J)_{\rho}$, given by the following identities:
\begin{align}
    I_q(J)_{\rho} =  -\sum_{K\le J}(-1)^{|K|}C_q(K)_{\rho},\\
    C_q(J)_{\rho} = \sum_{K \le J}(-1)^{|K|}I_q(K)_{\rho},
\end{align}
where we have assumed $I_q(K)_{\rho} = 0$ if $|K|  \le 1$ (our convention is $\rho_{K} = 1$ if $|K|=0$). 

Somewhat surprisingly, the choice $q=2$ when $\hal$ is a set of $n$ qubit labels (with $n$ even) yields an important entanglement monotone: the Minkowski length of the generalized Bloch vector, which is sometimes called the Stokes tensor.
This quantity corresponds to the $n$-tangle \cite{coffman2000distributed,wong2001multiparticle}, which is defined for a state $\rho$ on an even number of qubits as
\begin{align}
	\tau_n(\rho) = \Tr(\rho\sigma_2^{\otimes n}\rho^* \sigma_2^{\otimes n}),
\end{align}
where $\sigma_2=\left(\begin{smallmatrix}0&-i\\i&\phantom{-}0\end{smallmatrix}\right)$ is the Pauli $Y$-matrix and $\rho^*$ denotes the complex conjugate of $\rho$ taken entry-wise with respect to the standard basis used in the definition of $\sigma_2$.

Define now the generalized $n$-qubit Bloch vector $\left(Q_{(i_{1},\ldots, i_{n})}\right)_{i_1,\dots,i_n}$ via
\begin{align}
    Q_{(i_{1},\ldots, i_{n})} = \Tr\left(\rho \left(\sigma_{i_{1}}\otimes \dots \otimes \sigma_{i_{n}}\right)\right), \, i_{j} = 0,1,2,3, \label{eq:stokes-parameters}
\end{align} 
where $\sigma_{0} = I_{2}$ is the identity, $\sigma_{\mu}$ for $\mu =1,2,3$ are the three Pauli matrices, and $\frac{1}{2}\Tr(\sigma_{\mu}\sigma_{\nu}) = \delta_{\mu\nu}$ \cite{Jaeger_Sergienko_Saleh_Teich_2003,Jaeger_Teodorescu-Frumosu_Sergienko_Saleh_Teich_2003}. 
An invertible SLOCC transformation corresponds to a local action of the special linear group $SL(2,\mathbb{C})$ \cite{Dur_Vidal_Cirac_2000}.
This induces a transformation of the generalized Bloch vector coordinates in \eqref{eq:stokes-parameters} via local operations from the isomorphic group $O_{0}(1,3)$, defined as the proper Lorentz group that leaves invariant the Minkowski length of the Bloch vector \cite{Jaeger_Teodorescu-Frumosu_Sergienko_Saleh_Teich_2003}:
\begin{align}
    Q_{(n)}^{2} \equiv \frac{1}{2^{n}}\sum_{\iota \in I}(-1)^{|\iota|}Q_{\iota}^{2}.\label{eq:minkowski-length}
\end{align} 
Here, $I$ is the set of multi-indices $(i_{1},\ldots,i_{n})$ and for $\iota \in I$ we define $|\iota|$ to be the number of qubits for which $\sigma_{\iota}\coloneqq \bigotimes_{j \in [n]}\sigma_{\iota_{j}}$ acts non-trivially. 
As an example, if $n=4$ and $\iota = (0,2,3,0)$ then $|\iota| = 2$. 

While for $n$ odd the Minkowski length \eqref{eq:minkowski-length} vanishes on pure states $\rho$, for $n$ even the Minkowski length of the Bloch vector exactly corresponds to the $n$-tangle \cite{Jaeger_Teodorescu-Frumosu_Sergienko_Saleh_Teich_2003}:
\begin{align}
	Q_{(n)}^{2} = \tau_n(\rho).
\end{align}
The generalized Bloch vector also gives a convenient way to express the purity, and therefore the linear entropy, of reduced density matrices. In particular we can write \cite{Jaeger_Teodorescu-Frumosu_Sergienko_Saleh_Teich_2003}
\begin{align}
    S_2(J)_\rho = 1-\frac{1}{2^{|J|}}\sum_{\iota \in I_J}Q_\iota ^2,
\end{align} 
where $I_J \subseteq I$ corresponds to the multi-indices in $I$ with non-trivial support only in the subset $J$, along with the trivial identity Pauli string. 
Continuing our example above with $n=4$ qubits, the set $I_J$ for the two middle qubits (assuming for simplicity a one-dimensional chain) has the form $\iota = (0,i,j,0)$.

The alternating sum form of the Minkowski length $Q_{(n)}^{2}$ in \eqref{eq:minkowski-length} resembles that of Eq.~\eqref{eq:IEC-equals-total_correlation}. 
In fact, this similarity is not just coincidental, as observed in our next main result:
\begin{theorem}
	\label{thm:n-tangle}
	Let $\rho\in\cD(\cH_\cA)$ be a quantum state on $n$ qubits (with $n$ even), and let $P$ be the corresponding persistence module defined via \eqref{eq:filtration} in terms of the total correlation $C_2$ in \eqref{eq:deformed-total-correlation}.
	Then the reduced integrated Euler characteristic $\widetilde{\mathfrak{X}}_{2}(\infty)$ of this complex is equal to the Minkowski length of the generalized Bloch vector, and hence also equal to the $n$-tangle $\tau_n$:
\begin{align}
     \widetilde{\mathfrak{X}}_{2}(\infty) = I_{2}(\hal)_{\rho} =Q_{(n)}^{2}=  \tau_n(\rho).
\end{align} 
\end{theorem}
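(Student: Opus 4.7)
The plan is to reduce \Cref{thm:n-tangle} to \Cref{thm:interaction-information} (specialized to $q=2$), after which only the algebraic identity $I_2(\hal)_\rho=Q_{(n)}^2$ remains to be established; the final equality $Q_{(n)}^2=\tau_n(\rho)$ is then quoted directly from~\cite{Jaeger_Teodorescu-Frumosu_Sergienko_Saleh_Teich_2003}.

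Concretely, I would start from $\widetilde{\mathfrak{X}}_2(\infty)=\sum_{\varnothing\neq J\subseteq\hal}(-1)^{|J|-1}S_2(J)_\rho$ supplied by \Cref{thm:interaction-information}, substitute the Bloch-parameter expansion $S_2(J)_\rho=1-2^{-|J|}\sum_{\iota\in I_J}Q_\iota^2$, and exchange the order of summation. For each fixed multi-index $\iota\in I$ the problem reduces to evaluating the coefficient
\begin{align}
c(\iota)\coloneqq\sum_{\substack{J\supseteq\supp(\iota)\\ J\neq\varnothing}}\frac{(-1)^{|J|-1}}{2^{|J|}},
\end{align}
which by parametrising $J=\supp(\iota)\cup T$ with $T\subseteq\hal\setminus\supp(\iota)$ and invoking the elementary identity $\sum_{t=0}^{m}\binom{m}{t}(-1)^t/2^t=(1-1/2)^m=1/2^m$ should evaluate to $(-1)^{|\iota|-1}/2^n$ whenever $\iota\neq 0$, and to $1-1/2^n$ in the singular case $\iota=0$ (where the constraint $J\neq\varnothing$ must be imposed by hand). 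The purely constant part of $S_2(J)$ yields an additional contribution $\sum_{J\neq\varnothing}(-1)^{|J|-1}=1$. Using $Q_0=\Tr\rho=1$, assembling these three pieces should collapse the expression into $2^{-n}\sum_{\iota\in I}(-1)^{|\iota|}Q_\iota^2=Q_{(n)}^2$, as required.

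The main obstacle I anticipate is the careful bookkeeping around the identity Pauli string $\iota=0$: its coefficient $c(0)$ differs from the generic formula $(-1)^{|\iota|-1}/2^n$ by exactly the reduced-homology correction of $-1/2^n$ plus a constant $1$, and this anomaly must conspire with the constant-part contribution $\sum_{J\neq\varnothing}(-1)^{|J|-1}=1$ to restore a clean alternating sum over \emph{all} of $I$. Missing this cancellation would leave a spurious $\pm 1$ or $\pm 1/2^n$ in the final answer, so isolating the $\iota=0$ case separately, and keeping track of which $1/2^n$-terms have already been absorbed by the passage from the non-reduced to the reduced integrated Euler characteristic in \Cref{thm:interaction-information}, is the most error-prone step. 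Once $I_2(\hal)_\rho=Q_{(n)}^2$ is verified, citing the Minkowski-length--to--$n$-tangle identity closes the chain $\widetilde{\mathfrak{X}}_2(\infty)=I_2(\hal)_\rho=Q_{(n)}^2=\tau_n(\rho)$.
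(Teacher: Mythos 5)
Your proposal is correct and follows essentially the same route as the paper's proof: reduce to \Cref{thm:interaction-information} at $q=2$, expand $S_2(J)_\rho$ in the Bloch parameters, exchange the order of summation, and evaluate the superset sum $\sum_{J\supseteq\supp(\iota)}(-1)^{|J|-1}2^{-|J|}$ via the binomial theorem, with the identity string $\iota=0$ handled separately. Your accounting of the $\iota=0$ anomaly ($c(0)=1-2^{-n}$) against the constant contribution $\sum_{J\neq\varnothing}(-1)^{|J|-1}=1$ is exactly the cancellation the paper performs, so the argument goes through as written.
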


\begin{proof}
Starting from the definition of $I_2$ in \eqref{eq:deformed-interaction-information}, we have 
\begin{align}
\begin{aligned}
	    I_2(\hal)_\rho &= \sum_{J \in \Delta }(-1)^{|J|-1}S_2(J)_\rho\\
	    &=  \sum_{J\in \Delta}(-1)^{|J|-1}\left(1-\frac{1}{2^{|J|}}\sum\nolimits_{\iota \in I_J}Q_\iota ^2\right)\\
     &= 1 + \sum_{J \in \Delta}(-1)^{|J|}\frac{1}{2^{|J|}}\sum_{\iota \in I_J}Q_\iota ^2.
\end{aligned}
\end{align}
We exchange the two summations by summing over $\iota \in I$, weighted by the number of sets $J \in \Delta$ that are supersets of $\text{supp}(\iota)$. We then have 
\begin{align}
       I_2(\hal)_\rho = 1 + \sum_{\iota \in I}Q_{\iota}^2 \sum_{J \supseteq \text{supp}(\iota), \, J \ne \varnothing} (-2)^{-|J|} 
       =\frac{1}{2^n}\sum_{\iota \in I}(-1)^{|\iota|}Q_{\iota}^2.
\end{align}
In the final equality we use the fact that $\sum_{J \supseteq \text{supp}(\iota), J \ne \varnothing } (-2)^{-|J|} = (-1)^{|\iota|}/2^n$ for $|\iota| >0$. For the special case of $|\iota| = 0$, we have 
\begin{align}
   \sum_{J \supseteq \text{supp}(\iota), \, J \ne \varnothing} (-2)^{|J|} =  \sum_{J \in \Delta} (-2)^{|J|} = 2^{-n} - 1,
\end{align} which yields the result as claimed.
\end{proof}


The fact that $\widetilde{\mathfrak{X}}_2(\infty)$ is equal to the $n$-tangle shows that $\widetilde{\mathfrak{X}}_2(\infty)$ is monotonic on average under local operations and classical communications (LOCC). 
To wit, any LOCC operation $\Lambda$ on $\hilb_{\hal}$ has a Kraus decomposition 
\begin{align}
	\Lambda(\rho) = \sum\nolimits_{i}K_{i}(\rho)K_{i}^{\dagger},
\end{align} 
where $K_{i}\coloneqq \bigotimes_{A_{i}\in \hal} K_{A_{i}}$ and $\sum_{i}K_{i}^{\dagger}K_{i}=1_{\hilb_{\hal}}$, the identity operator on $\hilb_{\hal}$.
This operation yields post-measurement states $\sigma_{i}= K_{i}\rho K_{i}^{\dagger}/\Tr(K_{i}\rho K_{i}^{\dagger})$ occurring with probability $p_{i} = \Tr K_{i}\rho K_{i}^{\dagger}$.
An entanglement measure $F$ is now called an entanglement monotone under LOCC \cite{Vidal2000} if
\begin{align}
	F(J)_\rho \ge \sum\nolimits_{i}p_{i}F(J)_{\sigma_i}.
\end{align}
By direct analogy, we say a topological summary (TS) is an entanglement monotone if 
\begin{align}
	T^\rho \ge \sum\nolimits_{i}p_{i}T^{\sigma_i},\label{eq:TS-entanglement-monotone}
\end{align} 
where $T^\rho $ denotes the TS for state $\rho$. 
Theorem~\ref{thm:n-tangle} now immediately yields the following:
\begin{corollary}
	The reduced integrated Euler characteristic $\widetilde{\mathfrak{X}}_{2}(\infty) = I_{2}(\hal)_{\rho}$ of the persistence complex from Theorem~\ref{thm:n-tangle} is an entanglement monotone.	
\end{corollary}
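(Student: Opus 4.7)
The plan is essentially a one-step reduction. By \Cref{thm:n-tangle} we have the pointwise identification $\widetilde{\mathfrak{X}}_2(\infty)_\rho = \tau_n(\rho)$ on $n$-qubit states for $n$ even. Substituting this into the defining inequality \eqref{eq:TS-entanglement-monotone} for a topological-summary entanglement monotone turns the claim into the statement that the $n$-tangle $\tau_n$ is itself an LOCC monotone. Thus the only content beyond \Cref{thm:n-tangle} is the known entanglement-monotonicity of the $n$-tangle.

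To close the argument I would invoke the classical result of Coffman-Kundu-Wootters \cite{coffman2000distributed} (for three qubits) and its extension to arbitrary even $n$ by Wong-Sondhi \cite{wong2001multiparticle}. If a self-contained sketch is desired, one decomposes a general LOCC protocol into a sequence of single-party local POVMs, uses LU-invariance of $\tau_n$ (already observed in the paragraph preceding \Cref{lem:different-barcodes-LU-inequivalent}) together with the polar decomposition of each Kraus operator to reduce to positive Kraus families, and then exploits the fact that on pure states $\tau_n(|\psi\rangle\langle\psi|) = |\langle\psi^*|\sigma_2^{\otimes n}|\psi\rangle|^2$ is the squared modulus of a homogeneous polynomial $SL(2,\mathbb{C})^{\otimes n}$-invariant of degree two in each qubit factor. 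The Verstraete-Dehaene-De Moor framework \cite{verstraete2003normal} automatically upgrades such polynomial invariants to entanglement monotones via a convexity argument against the Kraus normalization $\sum_i K_i^\dagger K_i = 1_{\hilb_{\hal}}$.

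I do not expect a significant obstacle, since all the nontrivial analytic work has already been absorbed into \Cref{thm:n-tangle}. The only minor point worth verifying is that the polynomial formula $\tau_n(\rho) = \Tr(\rho\,\sigma_2^{\otimes n}\rho^*\sigma_2^{\otimes n})$ used there continues to satisfy the stochastic inequality \eqref{eq:TS-entanglement-monotone} on mixed states without passing through a convex-roof extension; this is routine given the polynomial $SL$-invariance structure of $\tau_n$ and the completeness relation for the Kraus operators, and it is the step I would single out as deserving an explicit (if short) check in a fully rigorous write-up.
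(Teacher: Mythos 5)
Your proposal is correct and matches the paper's own argument: the corollary is obtained by combining the identification $\widetilde{\mathfrak{X}}_2(\infty)=\tau_n$ from \Cref{thm:n-tangle} with the known LOCC-monotonicity of the $n$-tangle from \cite{coffman2000distributed,wong2001multiparticle}, which is exactly the paper's one-line deduction. Your additional remark that the mixed-state polynomial formula for $\tau_n$ (rather than a convex-roof extension) is the object whose stochastic monotonicity must be checked is a fair point of care, but it goes beyond what the paper itself verifies.
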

Identifying conditions under which topological summaries serve as entanglement monotones in the sense of \eqref{eq:TS-entanglement-monotone} is an intriguing research direction that we will explore in future work.

Theorem~\ref{thm:n-tangle} answers an open question posed by Eltschka and Siewert~\cite{Eltschka2018distributionof}, in which they defined a quantity $C_D(\psi)$ called \emph{distributed concurrence}.
In our notation, the square of $C_D(\psi)$ can be expressed as
\begin{align}
	C_D^2(\psi) = 2 \sum_{J\subseteq \cA} (-1)^{|J|-1} S_2(J)_\rho.\label{eq:distributed-concurrence}
\end{align}
Both $C_D$ and $C_D^2$ are entanglement monotones in the case of $n$ qubits, and $C_D^2$ is an entanglement monotone if no local dimension exceeds $3$.
Evidently, the expression of $C_D^2$ in \eqref{eq:distributed-concurrence} coincides with the Euler characteristic $\widetilde{\mathfrak{X}}_{2}(\infty) = I_{2}(\hal)_{\rho}$ in Theorem~\ref{thm:n-tangle} (up to an irrelevant factor of $2$).
These quantities are therefore entanglement monotones also in the more general case of an $n$-partite system whose local dimensions are at most $3$.
Moreover, in \cite{Eltschka2018distributionof} the authors noticed that \eqref{eq:distributed-concurrence} is reminiscent of an Euler characteristic and pondered the question whether the distributed concurrence can be assigned a topological meaning.
Here, we answer this question in the affirmative by realizing $C_D^2$ as the Euler characteristic of a topological space defined in terms of the multipartite entanglement structure of a given quantum state.
More precisely, $C_D^2$ arises as the integrated Euler characteristic of the persistence complex of a quantum state defined in terms of the $2$-deformed total correlation $C_2$ as given in \eqref{eq:deformed-total-correlation}.

A practical consequence of Theorem~\ref{thm:n-tangle} is that persistent barcodes provide \textit{finer} information about multipartite entanglement than $\widetilde{\mathfrak{X}}_{2}(\infty) = \tau_n$ alone, since the latter is the topological summary of the topological data encoded in the persistent barcodes.
We illustrate this in the following with two examples, which make use of the software package Javaplex \cite{Javaplex} to compute persistence barcodes.\footnote{MATLAB code to reproduce the results shown here is available at \url{https://github.com/felixled/entanglement_persistent_homology}.}

For the first example, we recall the notion of a \emph{graph state} \cite{hein2006entanglement}:
Let $|+\rangle = (|0\rangle+|1\rangle)/\sqrt{2}$ be the ``plus'' state of a qubit system $\mathbb{C}^2$ with orthonormal basis $\lbrace |0\rangle,|1\rangle\rbrace$, and let $CZ$ denote the controlled $Z$-gate, defined as the two-qubit operation $CZ\coloneqq |0\rangle\langle 0|\otimes \sigma_0 + |1\rangle\langle 1|\otimes \sigma_3$ (with $\sigma_3$ the Pauli-$Z$-operator).
For a simple undirected graph $G=(V,E)$ with vertex set $V$ and edge set $E$, the graph state $|G\rangle\in (\mathbb{C}^2)^{\otimes |V|}$ on $|V|$ qubits is defined as
\begin{align}
	|G\rangle \coloneqq \sum_{e\in E} CZ_e |+\rangle^{\otimes |V|},
\end{align}
where the notation $CZ_e$ means that $CZ$ acts on the two vertices incident with the edge $e\in E$.

For our example we choose $|V|=6$ and let $G_1$ and $G_2$ be the two graphs depicted on the left-hand side of the top and bottom rows of Figure~\ref{fig:comparison-graph-states}, respectively.
Note that the graph state $|G_2\rangle$ is local unitarily equivalent to a GHZ state $|\phi\rangle = (|0\rangle^{\otimes 6} + |1\rangle^{\otimes 6})/\sqrt{2}$ \cite{hein2006entanglement}.
It was shown in \cite{schatzki2022tensor} that a graph state on $n$ qubits has $n$-tangle equal to $1$ if all vertices have odd degree, and $0$ otherwise.
Since both $G_1$ and $G_2$ satisfy the former condition, we have $\tau_n(|G_1\rangle) = 1 = \tau_n(|G_2\rangle)$.

We now consider the persistence complexes for $|G_i\rangle$ induced by the $2$-deformed total correlation $C_2$ from \eqref{eq:deformed-total-correlation}.
The resulting persistence barcodes, computed in MATLAB using the software package Javaplex \cite{Javaplex}, are shown in Figure \ref{fig:comparison-graph-states}.
By Theorem~\ref{thm:interaction-information} and \ref{thm:n-tangle}, the $n$-tangle $\tau_n$ is equal to the reduced integrated Euler characteristic, which in turn equals the alternating sum of the lengths of all barcodes (omitting the infinitely long persisting $0$-dim.~barcode).
One can check using the data in Figure \ref{fig:comparison-graph-states} that we indeed have $\tau_n(|G_1\rangle) = 1 = \tau_n(|G_2\rangle)$ using the barcode data, as predicted by the results of \cite{schatzki2022tensor}.\footnote{For example, for the graph $G_1$ shown in the top row of Figure \ref{fig:comparison-graph-states}, there are $5$ (finite) $0$-dim.~barcodes of length $1/4$ each, $6$ $1$-dim.~barcodes of length $1/4$ and $4$ $1$-dim.~barcodes of length $1/2$, $10$ $2$-dim.~barcodes of length $1/2$, $4$ $3$-dim.~barcodes of length $1/2$ and one $3$-dim.~barcode of length $3/4$, and finally one $4$-dim.~barcode of length~$1$. It follows that
\begin{align}
	\tau_n = 1 = 5\cdot \frac{1}{4} - 4\cdot \frac{1}{2} - 6\cdot \frac{1}{4}+10 \cdot \frac{1}{2}  -4 \cdot \frac{1}{2}- \frac{3}{4}+1.
\end{align}}

However, the barcodes in Figure \ref{fig:comparison-graph-states} are evidently different, which by Lemma~\ref{lem:different-barcodes-LU-inequivalent} implies that $|G_1\rangle$ and $|G_2\rangle$ are LU-inequivalent.
Since LU-equivalence is the same as SLOCC-equivalence for graph states \cite{hein2006entanglement}, Lemma~\ref{lem:different-barcodes-LU-inequivalent} actually implies that $|G_1\rangle$ and $|G_2\rangle$ are in different SLOCC orbits, which is revealed by the rather different topological structures of the underlying persistence complexes.
For example, for the filtration interval $\eps\in[1.25,1.5]$ the graph state $|G_1\rangle$ shows the simultaneous existence of two $2$-dimensional barcodes and a $3$-dimensional barcode, whereas in the persistence complex for $|G_2\rangle$ all $2$-dimensional holes are immediately closed.

\begin{figure}[t]
	\centering
	\includegraphics[width=0.8\textwidth]{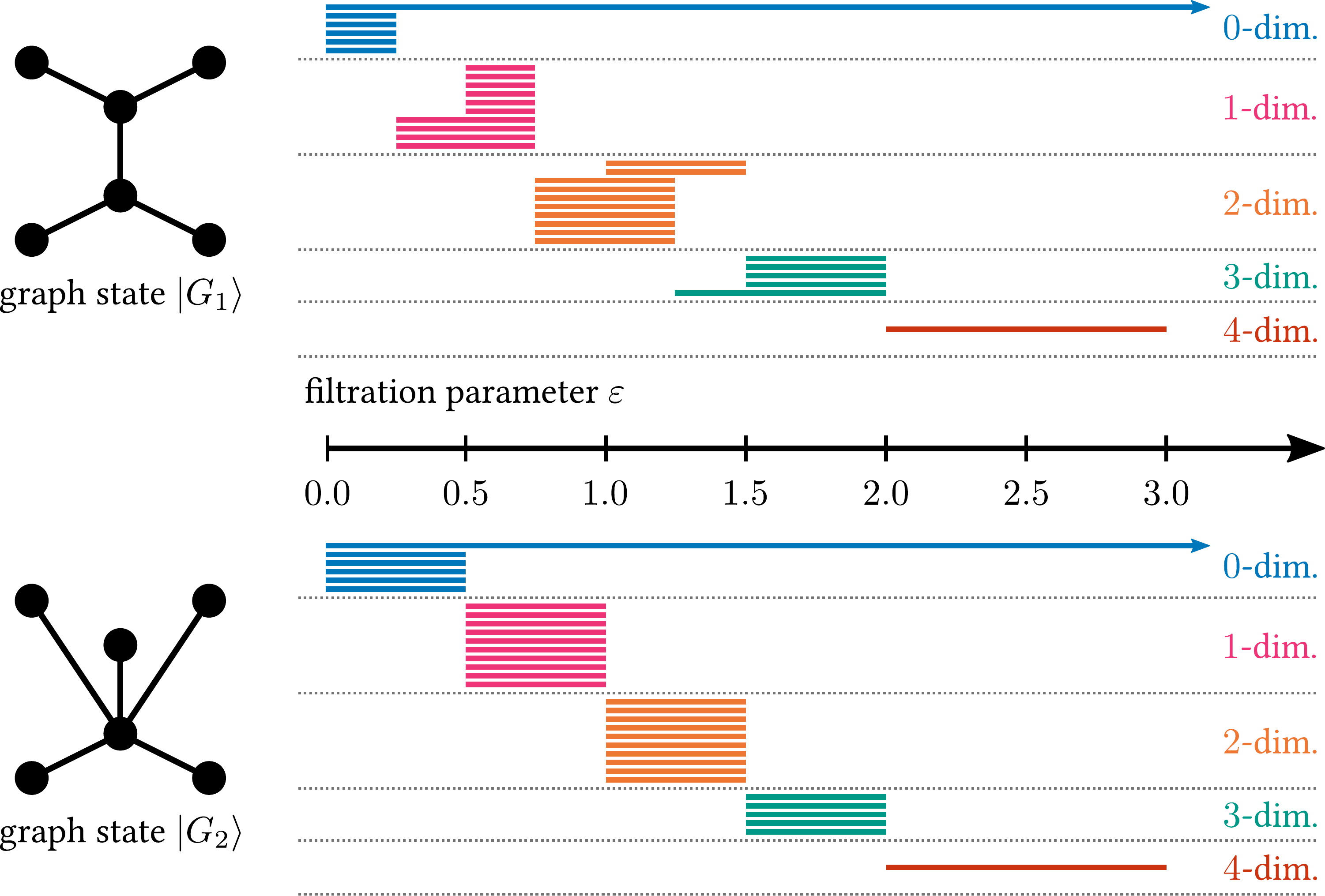}
	\caption{Comparison of the persistence barcodes for two graph states defined in terms of different $6$-vertex graphs, computed with the $2$-deformed total correlation $C_2$ in \eqref{eq:deformed-total-correlation} as the defining functional. 
		In each graph all vertices have odd degree, so that both graph states have $n$-tangle $\tau_n = 1$ \cite{schatzki2022tensor}.
		However, the persistence barcodes reveal a rather different topological structure of the corresponding persistence complexes, reflecting the fact that the two graph states are LU- and SLOCC-inequivalent.}
	\label{fig:comparison-graph-states}
\end{figure}

As another example, we consider the normalized versions of the following two $6$-qubit vectors, defined in terms of the computational basis $\lbrace |0\rangle,|1\rangle\rbrace$ and a parameter $t>0$:
\begin{align}
\begin{aligned}
		|\chi_4\rangle &\propto \frac{1}{t} |111111\rangle + \frac{1}{t}|111100\rangle + t|000010\rangle + t|000001\rangle\\
	|\chi_5\rangle &\propto \frac{\sqrt{2}}{t} |111111\rangle + \frac{1}{t} |111000\rangle + t |000100\rangle + t|000010\rangle + t |000001\rangle.
\end{aligned}
\label{eq:chi}
\end{align}
These states arise from applying the invertible SLOCC-operation $A_t = t|0\rangle\langle 0| + \frac{1}{t} |1\rangle\langle 1|$ to the first qubit of the states $\Xi_4$ and $\Xi_5$ defined in \cite[Sec.~IV]{osterloh2006entanglement}.
The states $\Xi_4$ and $\Xi_5$ both have vanishing $n$-tangle and lie in different SLOCC-equivalence classes \cite{osterloh2006entanglement}.
The same is hence true for the states $\chi_4$ and $\chi_5$ as well, but we note that neither $\chi_4$ and $\chi_5$ are in normal form for $t\neq 1$, since its single-qubit marginals are not all completely mixed.
SLOCC-inequivalence can hence not be inferred just from LU-inequivalence alone \cite{verstraete2003normal}.

We choose $t=4/3$ in \eqref{eq:chi} and compute the persistence complexes of $\chi_4$ and $\chi_5$ with respect to the $2$-deformed total correlation $C_2$ (whose integrated Euler characteristic equals the $n$-tangle).
The resulting barcodes are shown in \Cref{fig:comparison-osterloh-siewert} and reveal the different entanglement structures of these two states, which we take as indicating their SLOCC-inequivalence.

Note that, even though the $n$-tangle is invariant under local $SL(2,\mathbb{C})$-operations~\cite{wong2001multiparticle}, it cannot itself be used to detect SLOCC-inequivalence of two normalized states.
This is because the $n$-tangle is homogeneous of degree 2 (as a function of an operator) and hence changes its value under normalization.
As a result, SLOCC-equivalent normalized $n$-qubit states (with $n$ even) may have different $n$-tangle values.
This can be fixed by taking another homogeneous SLOCC invariant $f(\psi)$ of degree $k$, and defining the functional $g(\psi)\coloneqq \tau_n(\psi)/f(\psi)^{2/k}$ provided that $f(\psi)\neq 0$.
The functional $g$ is still SLOCC-invariant, and in addition invariant under scaling as a ratio of homogeneous functions of the same degree.
It follows that two states $\psi_1$ and $\psi_2$ with $f(\psi_i)\neq 0$ are SLOCC-inequivalent if $g(\psi_1)\neq g(\psi_2)$.

The proof of \Cref{thm:n-tangle} shows that the integrated Euler characteristic changes linearly with a rescaling of the functional $F$ defining the persistence complex.
Hence, for analyzing the SLOCC-equivalence of two given states $\psi_1$ and $\psi_2$ we may apply the construction above: choose a homogeneous SLOCC invariant $f(\psi)$ of degree $k$ with $f(\psi_i)\neq 0$ and redefine the $2$-deformed total correlation as $\widehat{C}_2 \coloneqq C_2/f(\psi)^{2/k}$.
The resulting integrated Euler characteristic is equal to the functional $g$ from above, and the barcodes still encode finer information about the entanglement structure.
For the two states $\chi_4$ and $\chi_5$ in \eqref{eq:chi}, the SLOCC invariant in eq.~(7) of \cite{Gour_Wallach_2013} with $\cA_q = \lbrace 4,5\rbrace$ and $l=6$ is non-zero on each, and can hence be used to rescale the persistence complex as described above.
The resulting barcodes are qualitatively identical to those in \Cref{fig:comparison-osterloh-siewert}.
We conjecture that such rescaled persistence barcodes can be used more generally to detect SLOCC-inequivalence.
However, we caution that the practicality of such a method is limited by the exponential scaling of computing the persistence complex and associated persistence barcodes.\footnote{A simple heuristic approach to speed up the computation of the simplicial complex $G(\eps)$ for a given filtration parameter $\eps$ is the following. Recall that a simplex $J\subseteq \cA$ is added to $G(\eps)$ if $F(J)\leq \eps$, and that $F$ is monotonic under taking partial traces, $F(J)\leq F(K)$ if $J\subseteq K$. 
Hence, in order to build the simplicial complex of an $n$-qubit system (where $n=|\cA|$), it is advantageous to start at the top of the subset lattice and first evaluate $F$ on the $n$ subsets of size $n-1$, then on the $\binom{n}{2}$ subsets of size $n-2$, etc. As soon as $F(K)\leq \eps$ for some $K\subseteq \cA$, we automatically add all subsets $J\subseteq K$ to the complex as well and move to the next subset of size $|K|$.}

\begin{figure}[t]
	\centering
	\includegraphics[width=0.7\textwidth]{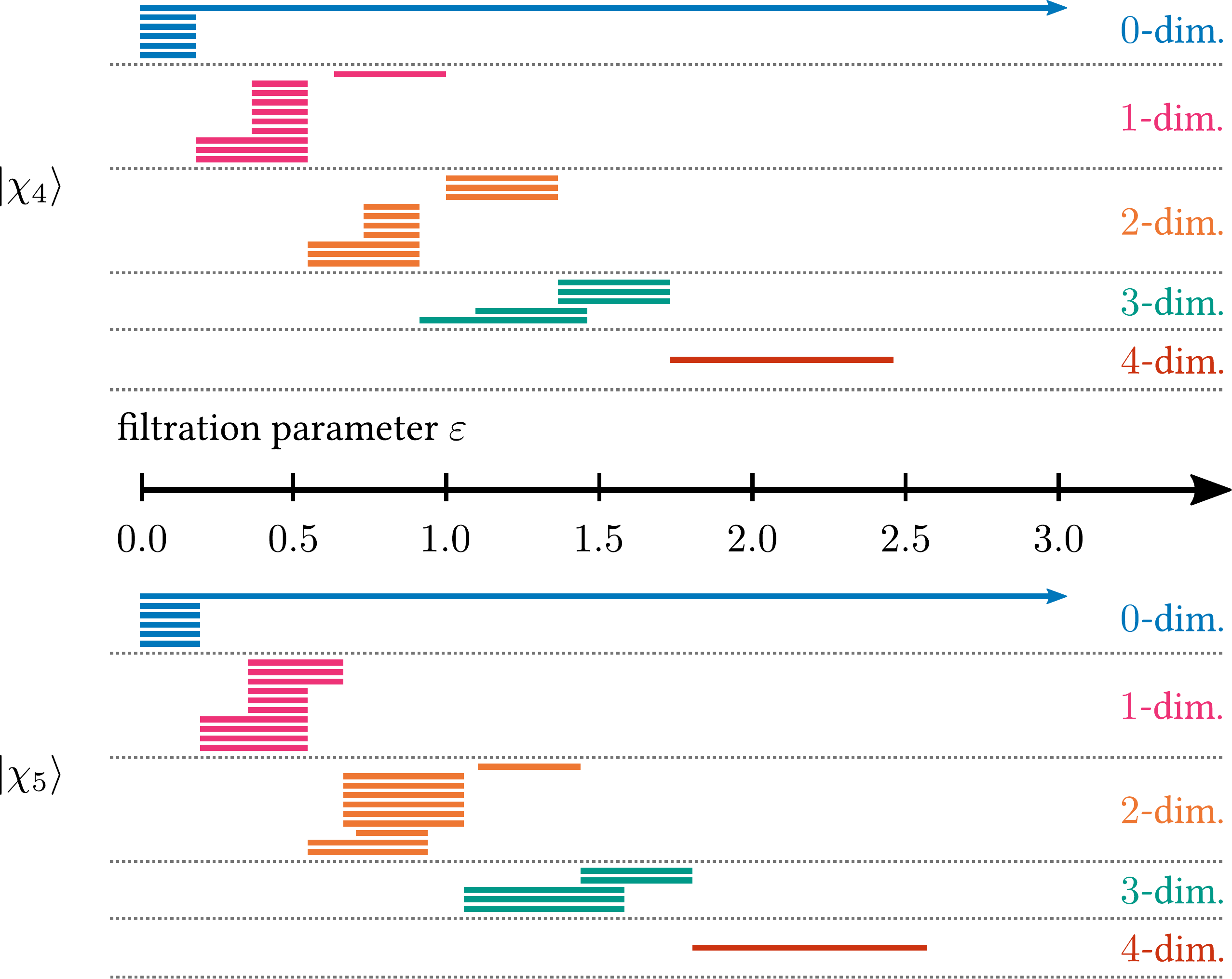}
	\caption{Comparison of the persistence barcodes for the two $6$-qubit states $|\chi_4\rangle$ and $|\chi_5\rangle$ defined in \eqref{eq:chi}, respectively.
		The barcodes are computed with the $2$-deformed total correlation $C_2$ in \eqref{eq:deformed-total-correlation} as the defining functional. 
		The two states are SLOCC-inequivalent \cite{osterloh2006entanglement}, whereas the $n$-tangle vanishes for both states.
		In contrast, the barcodes once again reveal different topological features corresponding to the different entanglement structures.}
	\label{fig:comparison-osterloh-siewert}
\end{figure}

To summarize, the examples in Figures~\ref{fig:comparison-graph-states} and \ref{fig:comparison-osterloh-siewert} suggest that the persistence barcodes convey more information about the entanglement structure than its topological summary, the $n$-tangle, alone.
Intuitively, the integrated Euler characteristic forgets the persistence of the homology groups, while this data is represented in the barcodes.
An interesting question for future research is to determine whether the barcodes themselves, or other functions thereof, have an operational interpretation in terms of entanglement monotones.

We conclude this section by stressing that the persistence complexes considered in this paper are defined in terms of the $q$-deformed total correlation, a function of entropic quantities evaluated on marginals of a multipartite state (see \Cref{sec:functionals}).
Such correlation measures are limited in detecting multipartite entanglement, as the following example illustrates \cite{nezami2020multipartite,walter2023private}: Consider a tripartite system $A_1A_2A_3$ with each system consisting of two qubits, $A_i = A_{i,1}A_{i,2}$ for $i=1,2,3$.
Denote by $|\chi_n\rangle = \frac{1}{\sqrt{2}}(|0\rangle^{\otimes n} + |1\rangle^{\otimes n})$ an $n$-partite GHZ state, and consider the two states 
\begin{align}
	\begin{aligned}
		|\psi_1\rangle_{A_1A_2A_3} &= |\chi_2\rangle_{A_{1,1}A_{2,1}}\otimes |\chi_2\rangle_{A_{1,2}A_{3,1}}\otimes |\chi_2\rangle_{A_{2,2}A_{3,2}}\\
	|\psi_2\rangle_{A_1A_2A_3} &= |\chi_3\rangle_{A_{1,1}A_{2,1}A_{3,1}}\otimes |\chi_3\rangle_{A_{1,2}A_{2,2}A_{3,2}}.
	\end{aligned}
\label{eq:weird-example}
\end{align}
In the state $\psi_1$ each party shares maximally entangled Bell states with the other two parties, whereas in $\psi_2$ the three parties share two copies of a tripartite GHZ state.
The spectra of all marginals of the two states coincide, and hence the same is true for all functions computed from these spectra, in particular entropies.
But the entanglement in $\psi_1$ and $\psi_2$ is rather different operationally, and this is witnessed mathematically e.g.~by computing the log-negativity $\log \|(\psi_k)_{A_iA_j}^{T_j}\|_1$ (with $X^{T_j}$ denoting the partial transpose over system $A_j$) of any two-party marginal of the two states.
We leave it as another open question for future research to determine functionals giving rise to persistence complexes that can detect the different entanglement structures of the states in \eqref{eq:weird-example}.

\subsection{Relative homology and strong subadditivity}\label{sec:reduced}
Finally, we discuss the concept of \textit{relative} homology, which is closely related to the reduced homology discussed in Section~\ref{sec:topsumm}, and particularly relevant for our setting of viewing a multipartite state as a simplicial complex.

To define this concept for a given complex $\Delta$, we fix a subcomplex $K \le \Delta$ and consider the relative chain groups $\cC_{i}(\Delta)/\cC_{i}(K)$, for which the induced boundary map $\partial_{k}'$ on the quotient group defines the relative homology $H_{k}(\Delta,K) = \ker \partial_{k}' / \text{im} \partial_{k+1}'$. 
We use superscripts to denote the subcomplex with respect to which the relative homology is taken.
For example, $\mathfrak{X}^{\Delta_{S}}$ is the integrated Euler characteristic (IEC) relative to the subcomplex $\Delta_{S} = 2^{S}$ for a given subset $S \subseteq \hal$. 
 
Quite remarkably, it turns out that the IEC of a relative persistence complex is intimately connected with strong subadditivity (SSA) \cite{lieb1973ssa}, a fundamental entropy inequality.
For a tripartite quantum state $\rho_{ABR}$, this inequality states that
\begin{align}
	I(A:B|R)\coloneqq S(AR) + S(BR) - S(R) - S(ABR)\geq 0,\label{eq:ssa}
\end{align}
where the quantity $I(A:B|R)$ is known as quantum conditional mutual information.

Let now $\cA=\lbrace A,B,R\rbrace$ be a tripartite system, and take the reduced homology on $\cA$ with respect to a subset $\lbrace A,B\rbrace$ for a given quantum state $\rho_{ABR}$ and the total correlation $C(\cdot)$ defined in \eqref{eq:total-correlation}.
Intuitively, the computation of the relative IEC $\mathfrak{X}^{\Delta_{\{A,B\}}}(\infty)$ in this setting amounts to omitting all contributions from simplices contained exclusively in $\lbrace A,B\rbrace$.
This leads to the following intriguing result:
\begin{theorem}
 	\label{thm:non-positivity-IEC}
 	Let $\rho_{ABR}$ be a tripartite quantum state and let $P^{\Delta_{A,B}}$ be the corresponding persistence module relative to the subcomplex $\Delta_{A,B}$, defined in terms of the total correlation $C(\cdot)$ from \eqref{eq:total-correlation}.
 	We then have that
 	\begin{align}
 		\begin{aligned}
 			\mathfrak{X}^{\Delta_{\{A,B\}}}(\infty)\coloneqq \lim_{q\to 1}\mathfrak{X}_{q}^{\Delta_{\{A,B\}}}(\infty) = -I(A:B|R)_{\rho}  \le 0.\label{eq:relative-iec}
 		\end{aligned}
 	\end{align} 
 \end{theorem}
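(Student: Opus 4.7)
The plan is to mirror the calculation in the proof of \Cref{thm:interaction-information}, adapted to the relative setting. Relative homology with respect to $\Delta_{\{A,B\}}$ effectively restricts the chain complex to simplices in $\Delta\setminus\Delta_{\{A,B\}}$, so the Euler--Poincar\'e formula gives $\chi(\eps) = \sum_k (-1)^k\, n_k^{\mathrm{rel}}(\eps)$, where $n_k^{\mathrm{rel}}(\eps)$ counts those $k$-simplices $J$ with $C(J)\leq \eps$ and $J\not\subseteq\{A,B\}$. Here the only relevant simplices are $\{R\}$, $\{A,R\}$, $\{B,R\}$, and $\{A,B,R\}$, and each contributes $(\emax - C(J))$ to the time integral up to some sufficiently large cutoff $\emax$.

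First, I would write
\begin{align}
\mathfrak{X}^{\Delta_{\{A,B\}}}(\emax) = \emax\sum_{k=0}^{2}(-1)^k\bigl(\tbinom{3}{k+1}-\tbinom{2}{k+1}\bigr) - \sum_{\substack{J\subseteq\{A,B,R\}\\ J\not\subseteq\{A,B\}}}(-1)^{|J|-1}C(J)_\rho,
\end{align}
following line by line the steps from \eqref{eq:euler-char} to \eqref{eq:intermediate}. A quick check shows $1-2+1 = 0$, so the $\emax$-prefactor vanishes and no reduction is required; this is the relative analogue of the binomial cancellation that forced the use of reduced homology in \Cref{thm:interaction-information}.

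Next, I would substitute the four values $C(\{R\})=0$, $C(\{A,R\})=I(A\!:\!R)$, $C(\{B,R\})=I(B\!:\!R)$, and $C(\{A,B,R\}) = S(A)+S(B)+S(R)-S(ABR)$, expand the mutual informations in terms of von Neumann entropies, and watch the $S(A)$ and $S(B)$ terms cancel. What remains is $S(AR)+S(BR)-S(R)-S(ABR)$, which is exactly $I(A\!:\!B|R)_\rho$, yielding
\begin{align}
\mathfrak{X}^{\Delta_{\{A,B\}}}(\infty) = -\sum_{J\not\subseteq\{A,B\}}(-1)^{|J|-1}C(J)_\rho = -I(A\!:\!B|R)_\rho.
\end{align}
The limit $q\to 1$ is justified by continuity of the Tsallis entropy in $q$, since only finitely many entropies appear. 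Non-positivity then follows immediately from strong subadditivity \eqref{eq:ssa}.

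I do not expect a serious obstacle here: the main content is algebraic, and the only subtle point is confirming that the relative version of the binomial-type cancellation in \eqref{eq:intermediate} actually happens, so that no ``$\widetilde{\cdot}$'' reduction is secretly needed. If desired, this bookkeeping could be streamlined by invoking the general formula $\widetilde{\mathfrak{X}}_{F_\rho}(\infty) = \sum_{J}(-1)^{|J|}F(J)_\rho$ restricted to $J\not\subseteq\{A,B\}$, but writing out the four simplices explicitly is cleaner in the tripartite case.
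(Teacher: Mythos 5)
Your proposal is correct and follows essentially the same route as the paper's proof: both enumerate the four simplices not contained in $\{A,B\}$, observe that the $\eps_{\mathrm{max}}$ contributions cancel with signs $+1-1-1+1=0$, substitute the total-correlation values to obtain $C(AR)+C(BR)-C(R)-C(ABR)=-I(A\!:\!B|R)$, and invoke strong subadditivity. Your explicit verification of the binomial cancellation and of the $q\to 1$ limit merely spells out what the paper leaves implicit.
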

 \begin{proof}
 	We denote once again by $\eps_{\mathrm{max}}$ the largest value of the filtration parameter at which the number of barcodes changes.
 	Functionally, the relative IEC includes contributions from all simplices except for those that occur only in $\Delta_{A,B}$. 
 	Written out, we have 
 	\begin{align}
 		\mathfrak{X}^{\Delta_{\{A,B\}}}(\eps_{\mathrm{max}}) &= \left(\eps_{\mathrm{max}} - C(R)\right)  -\left(\eps_{\mathrm{max}} - C(BR)\right)\notag\\
 		& \qquad {} -\left(\eps_{\mathrm{max}} - C(AR)\right)+ \left(\eps_{\mathrm{max}} - C(ABR)\right) \notag\\
 		&= C(AR) + C(BR) - C(R) - C(ABR)\notag\\
 		&= -S(AR) - S(BR) + S(R) + S(ABR)\notag\\
 		& = -I(A:B|R) \le 0,
 	\end{align}
 	which proves the equality in \eqref{eq:relative-iec}. 
 	The following inequality follows from \eqref{eq:ssa}.
 \end{proof}
 Theorem~\ref{thm:non-positivity-IEC} says that the relative IEC of a persistence complex defined on a tripartite quantum state is always non-positive because of SSA.
 This may suggest a deeper connection between entropy inequalities and the persistence complex of a multipartite quantum state, which will be explored in future work.
 
 We can apply the same idea to a bipartite state $\rho_{AB}$ and take the relative homology with respect to one of the vertices.
 This draws a connection to a special case of \eqref{eq:ssa} called subadditivity, $S(A)+S(B)-S(AB)\geq 0$.
 In this case, the relative IEC is non-negative:
 \begin{corollary}
 	\label{cor:bipartite-relative}
 	Let $\hal = \{A,B\}$. Then the relative persistent homology yields \begin{align}
 		\lim_{q\to 1}\mathfrak{X}_{q}^{\Delta_{\{A\}}}(\infty) = \lim_{q\to 1}\widetilde{\mathfrak{X}}_{q}(\infty) = I(A:B) \ge 0,
 	\end{align}  \par 
 \end{corollary}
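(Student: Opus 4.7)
The plan is to reduce the corollary to the two theorems already established in this section by specializing to two parties. For the reduced integrated Euler characteristic, I would invoke Theorem~\ref{thm:interaction-information} with $\hal = \{A,B\}$: the alternating sum over non-empty subsets collapses to the three terms $S_q(A) + S_q(B) - S_q(AB)$, and sending $q \to 1$ yields the quantum mutual information $I(A{:}B)$ by definition.

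For the relative integrated Euler characteristic, I would repeat, in a smaller complex, the bookkeeping used in the proof of Theorem~\ref{thm:non-positivity-IEC}. The full complex $\Delta = 2^{\{A,B\}} \setminus \varnothing$ has only three simplices, and quotienting by the subcomplex $\Delta_{\{A\}} = \{\{A\}\}$ discards the vertex $\{A\}$. The surviving contributions to $\mathfrak{X}_q^{\Delta_{\{A\}}}(\eps_{\max})$ are $+(\eps_{\max} - C_q(B))$ from the $0$-simplex $\{B\}$ and $-(\eps_{\max} - C_q(AB))$ from the $1$-simplex $\{A,B\}$. The $\eps_{\max}$ terms cancel, and using that $C_q$ vanishes on any singleton (as $C_q(\{v\}) = S_q(v) - S_q(v) = 0$ from \eqref{eq:deformed-total-correlation}), the expression simplifies to $C_q(AB) = S_q(A) + S_q(B) - S_q(AB)$. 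The $q \to 1$ limit again gives $I(A{:}B)$, matching the reduced IEC.

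The non-negativity $I(A{:}B) \ge 0$ is then immediate from subadditivity of the von Neumann entropy, i.e., the case of trivial $R$ in \eqref{eq:ssa}.

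I do not anticipate any real obstacle here: the corollary is essentially a two-party shadow of Theorem~\ref{thm:non-positivity-IEC} combined with Theorem~\ref{thm:interaction-information}. The one conceptual point worth flagging is \emph{why} the reduced and relative integrated Euler characteristics should coincide in this bipartite setting; both operations amount to excluding the contribution of a single $0$-simplex from the alternating sum, which explains the ``coincidence'' and suggests a cleaner systematic comparison between reduced and relative persistent homology for complexes built from quantum correlation functionals, a potentially fruitful direction to formalize in the future work alluded to after Theorem~\ref{thm:non-positivity-IEC}.
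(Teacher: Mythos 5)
Your proof is correct and follows exactly the route the paper intends: the paper states the corollary without an explicit proof, and the natural argument is precisely your combination of Theorem~\ref{thm:interaction-information} specialized to two parties with a rerun of the simplex-by-simplex bookkeeping from the proof of Theorem~\ref{thm:non-positivity-IEC} on the smaller complex (correctly heeding the paper's own warning that one cannot just set $R=1$ there). Your signs, the vanishing of $C_q$ on singletons, and the subadditivity conclusion all check out.
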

 Note that Corollary~\ref{cor:bipartite-relative} cannot simply be obtained from Theorem~\ref{thm:non-positivity-IEC} via taking $R=1$ because of the different sets of simplices over which the IEC is summed.\par 

\section{Conclusion and future directions of research}\label{sec:conclusions}
In this work we have proposed a method of analyzing multipartite entanglement using persistent homology.
The key idea is to view the local systems of a multipartite quantum system $\cA = \lbrace A_1,\dots,A_n\rbrace$ as vertices on which we define a simplicial complex in terms of a correlation measure evaluated on subsets of vertices for a given quantum state: a simplex $J\subseteq \cA$ is added if the correlation measure does not exceed a given value $\eps$.
Varying this parameter $\eps$ then defines a filtration of simplicial complexes called the persistence complex, and homological tools can be used to track how the topology changes as a function of $\eps$.
The persistence complex is uniquely represented using persistence barcodes such as those in Figures \ref{fig:example}, \ref{fig:comparison-graph-states} and \ref{fig:comparison-osterloh-siewert}.
These barcodes encode the relevant topological information about the underlying persistence complex, and can be further processed to yield topological summaries such as the integrated Euler characteristic (IEC).
Our main results show that the persistence complex built with a deformed version of the total correlation yields a deformed version of the interaction information as its IEC.
Further specializing this to the total correlation based on the linear entropy yields the Minkowski length of the generalized Bloch vector as the IEC, which for an even number of qubits is known to be equal to the $n$-tangle, a well-known entanglement monotone.
Our persistent homology pipeline thus yields a topological summary that coincides with an operationally relevant measure in entanglement theory.
At the same time, the ``topological fingerprint'' of a quantum state encoded in its barcode conveys more information about the underlying entanglement structure than the topological summary alone. 
We illustrate this with examples of pairs of states with the same $n$-tangle, but lying in different SLOCC orbits, which is indicated by the corresponding barcodes (Figures \ref{fig:comparison-graph-states} and \ref{fig:comparison-osterloh-siewert}).
Finally, we draw a connection between the relative homology of a quantum state and the fundamental strong subadditivity property of the von Neumann entropy.

While our examples exhibited in Figures \ref{fig:comparison-graph-states} and \ref{fig:comparison-osterloh-siewert} demonstrate that barcodes may distinguish between different entanglement structures, it would be interesting to find a concrete \emph{operational} interpretation of these different topological features on the level of barcodes.
Moreover, the emergence of the $n$-tangle as a topological summary of a persistence complex suggests that our approach could lead to the identification of other entanglement monotones or operationally motivated correlation measures as topological invariants.
In principle, any multipartite entanglement functional satisfying monotonicity under partial trace can be used to build a persistence complex using our construction.
Examples of functionals with this property are the secrecy monotone \cite{Yang_Horodecki_Horodecki_Horodecki_Oppenheim_Song_2009}, the concentratable entanglement \cite{Beckey_Gigena_Coles_Cerezo_2021}, and what we term the ``powerset mean'' of any entanglement measure satisfying strong subadditivity (SSA). 

Another natural candidate for a functional monotonic under partial trace is a so-called generalized divergence \cite{leditzky2018approximate}. 
To see this, recall that in the special case $q\to 1$ the total correlation $C(J)_\rho$ of a state $\rho$ on $\cH_\cA$ evaluated on a subset $J\subseteq\cA$ can be written as
\begin{align}
	C(J)_\rho = D\Big(\rho_{J}\|\bigotimes\nolimits_{j\in J} \rho_{A_j}\Big) = \min_{\sigma_{A_i}} D\Big(\rho_{J}\|\bigotimes\nolimits_{j\in J} \sigma_{A_j}\Big),
	\label{eq:total-correlation-as-divergence}
\end{align}
where $D(\cdot\|\cdot)$ denotes the quantum relative entropy, and the minimization in the second equality is over all states $\sigma_{A_i}$ on the single systems $A_i$.
The expression \eqref{eq:total-correlation-as-divergence} is particularly useful, since the monotonicity property $C(I)\leq C(J)$ for $I\subseteq J$ is a direct consequence of the data processing inequality for $D(\cdot\|\cdot)$.
Moreover, it expresses the functional $C(J)$ as the relative entropy distance to the set of fully uncorrelated states.
This observation leads us to consider the following generalization.
Let $\bD(\cdot\|\cdot)$ be any \emph{generalized divergence} \cite{leditzky2018approximate}, that is, a functional on pairs of positive semidefinite operators satisfying the data processing inequality.
Then given a pair of quantum states $(\rho_\cA,\sigma_\cA)$, we may consider the persistent homology of $\rho_\cA$ relative to $\sigma_\cA$ induced by the functional $\bD(\cdot\|\cdot)$.
This approach is particularly amenable to the setting of a general resource theory consisting of a set of ``free states'' $\cF$ and a set of allowed operations~\cite{chitambar2019quantum}.
For example, in (bipartite) entanglement theory $\cF$ is the set of separable states, and LOCC is the set of allowed operations.
One then considers an entanglement measure called the \emph{relative entropy of entanglement} \cite{vedral1997quantifying}, defined as $E_R(\rho) \coloneqq \min_{\sigma\in\cF} D(\rho\|\sigma)$.
For a given resource theory with a fixed set $\cF$ of free states that is stable under taking partial traces and a suitable generalized divergence $\bD(\cdot\|\cdot)$, we may now consider the functional $\bE(\rho)\coloneqq \min_{\sigma\in\cF}\bD(\rho\|\sigma)$ for our persistent homology.
Inspired by Theorem~\ref{thm:n-tangle}, we expect our framework to yield interesting monotones for general resource theories, a thorough investigation of which will be the subject of future work. 


\paragraph*{Acknowledgments.}
The authors acknowledge helpful conversations with Henry Adams, Yuliy Baryshnikov, Jacob Beckey, Eric Chitambar, Jens Siewert, Juan Pablo Vigneaux, Michael Walter and Shmuel Weinberger. 
This research was partially supported through the IBM-Illinois Discovery Accelerator Institute.

\printbibliography[heading=bibintoc]

\end{document}